\newtheorem{definition}{Definition} 
\newtheorem{proposition}{Proposition}
\newtheorem{lemma}{Lemma}
\newcommand{\pnt}[1]{{\mbox{\boldmath $#1$}}}
\newcommand{\cof}[2]{\mbox{$#1_{\boldsymbol{#2}}$}}
\newcommand{\nGz}[2]{$G_{non-\{z\}}$}
\newcommand{\mi}[1]{\mathit{#1}}
\newcommand{\ti}[1]{\textit{#1}}
\newcommand{\tb}[1]{\textbf{#1}}
\newcommand{\ttt}{\>\>\>}
\newcommand{\Tt}{\>\>}
\newcommand{\SUb}[2]{\mbox{$\mi{#1}_\mi{#2}$}}
\newcommand{\Spb}[3]{\mbox{$#1^\mi{#2}_\mi{#3}$}}
\newcommand{\SPb}[3]{\mbox{$#1^\mi{#2}_{#3}$}}
\newcommand{\Sup}[2]{\mbox{$#1^\mi{#2}$}}
\newcommand{\prob}[2]{\mbox{$\exists{#1} [#2]$}}
\newcommand{\Prob}[3]{\mbox{$\exists{#1}\exists{#2}[#3]$}}
\newcommand{\Comment}[1]{}
\newcommand{\nimpl}[2]{\mbox{$#1 \not\rightarrow #2$}}
\newcommand{\tr}[3]{\mbox{(\pnt{#1_{#2}},\dots,\pnt{#1_{#3}})}}
\newcommand{\impl}[2]{\mbox{$#1 \rightarrow #2$}}
\newcommand{\ks}{\mbox{$\xi$}}
\newcommand{\Ks}{\mbox{$\xi$}~}
\newcommand{\ksr}{\Sup{\ks}{rlx}}
\newcommand{\Ksr}{\Sup{\ks}{rlx}~}
\newcommand{\ksrr}[1]{\Spb{\ks}{rlx}{#1}}
\newcommand{\Et}{\mbox{$\eta$}~}
\newcommand{\alg}{\ti{PC\!\_LoR}}
\newcommand{\Alg}{\ti{PC\!\_LoR}~}
\newcommand{\Mix}{\ti{LoR\_IC}~}
\newcommand{\abs}[1]{\mbox{$\mathbb{#1}$}}
\newcommand{\Abs}[2]{\mbox{$\mathbb{#1}_{#2}$}}
\mathchardef\mhyphen="2D
\newcommand{\Fii}{\mbox{$\phi$}~}
\newcommand{\trx}[1]{\mbox{$\SPb{T}{rlx}{#1}$}}
\newcommand{\ttrx}[1]{\mbox{$\SPb{\abs{T}}{rlx}{#1}$}}
\newcommand{\Ttrx}[1]{\mbox{$\SPb{\abs{T}}{RLX}{#1}$}}
\begin{document}

\title{Property Checking By Logic Relaxation}

\author{\IEEEauthorblockN{Eugene Goldberg} 
\IEEEauthorblockA{
eu.goldberg@gmail.com}}

\maketitle

\begin{abstract}
We introduce a new framework for Property Checking (PC) of sequential
circuits. It is based on a method called \tb{Lo}gic \tb{R}elaxation
(LoR). Given a safety property, the LoR method relaxes the transition
system at hand, which leads to expanding the set of reachable states.
For $j$-th time frame, the LoR method computes a superset $A_j$ of the
set of bad states reachable in $j$ transitions \ti{only} by the
relaxed system.  Set $A_j$ is constructed by a technique called
partial quantifier elimination.  If $A_j$ does not contain a bad state
and this state is reachable in $j$ transitions in \ti{the relaxed
  system}, it is also reachable in the original system. Hence the
property in question does not hold.

The appeal of PC by LoR is as follows. An inductive invariant (or a
counterexample) generated by LoR is a result of computing the states
reachable only in the relaxed system.  So, the complexity of PC can be
drastically reduced by finding a ``faulty'' relaxation that is close
to the original system. This is analogous to equivalence checking
whose complexity strongly depends on how similar the designs to be
compared are.
\end{abstract}

\section{Introduction}
%
%
\subsection{Motivation}
\label{subsec:motivation}
Property checking is an important part of the formal verification of
hardware. Recently, new powerful methods of property checking have
been developed~\cite{interpolation,ic3}. A characteristic feature of
those methods is that they use SAT-solving to avoid operating on
quantified formulas e.g. performing quantifier elimination. This is
done because of insufficient efficiency of the current algorithms for
quantified logic. On the other hand, such algorithms have great
potential because reasoning on quantified formulas facilitates very
powerful transformations preserving equi-satisfiability rather than
equivalence.

To address the problem of reasoning on quantified formulas, we have
been developing a machinery of Dependency sequents
(D-sequents)~\cite{fmcad12,fmcad13,fmsd14}. In particular, we have
introduced a technique called Partial Quantifier Elimination
(PQE)~\cite{hvc-14} that can boost the performance of algorithms
operating on quantified formulas. Our research on D-sequents and PQE
is still work in progress and we believe that catching up with
SAT-based algorithms is just a matter of time.  So we try to combine
work on improving PQE algorithms with research that explains the
benefits of such algorithms for formal
verification~\cite{tech_rep_crr,tech_rep_ec_sim,tech_rep_ec_lor}.  In
particular, in ~\cite{tech_rep_ec_lor}, we introduced a new
verification method called Logic Relaxation (LoR) enabled by PQE.  We
showed that applying the LoR method to equivalence checking of
combinational circuits facilitates generation of powerful inductive
proofs.  In this paper, we continue this work by applying the LoR
method to property checking of sequential circuits.

%
%
\subsection{Problem formulation}
\label{subsec:problem}
Let $M(S,X,Y,Z,S')$ be a sequential circuit where $X$, $Y$ and $Z$ are
sets of input, internal and output combinational variables
respectively, $S$ and $S'$ are sets of present and next state
variables respectively.  Let $T(S,X,Y,S')$ be a formula representing
the transition relation specified by $M$.  All formulas we consider in
this paper are Boolean. We will assume that every formula is
represented in the Conjunctive Normal Form (CNF).  We will call a
complete assignment \pnt{s} to state variables a \pnt{state}.
Henceforth, by an assignment \pnt{v} to a set of variables $V$ we mean
a \ti{complete} assignment unless otherwise stated.  Denote by \Ks the
transition system specified by transition relation $T$ and a set of
initial states $I(S)$.  Let $P(S)$ specify the property of \Ks to be
verified.  Given a Boolean formula $A(S)$, a state \pnt{s} is called
an $A$-state if $A(\pnt{s})=1$. We will refer to $P$-states and
$\overline{P}$-states as \tb{good} and \tb{bad} ones respectively.  In
this paper, we consider checking a safety property. That is, given a
property $P$, one needs to prove that a) no bad state of \Ks is
reachable from an $I$-state or b) a counterexample exists.

%
%
\subsection{Property checking by logic relaxation}
\label{subsec:pc_by_lor}
Denote by \Ksr a ``relaxed'' version of system \Ks. Both \Ks and \Ksr
have the same set of initial states but are different in their
transition relations.  Let $S_j,X_j,Y_j$ denote sets of variables of
\Ks in $j$-th time frame. Let $T_{j,j+1}$ denote formula
$T(S_j,X_j,Y_j,S_{j+1})$ i.e. the transition relation of \Ks in $j$-th
time frame.  Let \trx{j,j+1} denote formula
$\Sup{T}{rlx}(S_j,X_j,Y_j,S_{j+1})$ specifying the transition relation
of \Ksr in $j$-th time frame.  Formula $T_{j,j+1}$ implies
\trx{j,j+1}, so the set of transitions allowed in \Ksr is a superset
of those in \ks.

The idea of Property Checking (PC) by Logic Relaxation (LoR) is as
follows.  Since the set of valid traces of \Ksr is a superset of
that of \ks, a state reachable in $j$-th time frame of \Ks is also
reachable in \ksr.  Suppose that one has computed a set containing
all states reachable in $j$-th time frame \ti{only} in \Ksr.  Then
the existence of a bad state \pnt{s} that is not in this set and is
reachable in the \ti{relaxed} system \Ksr means that \pnt{s} is
reachable in \Ks as well and property $P$ fails.

%
%
\subsection{Boundary formulas}
\label{subsec:bnd_forms}
A key part of PC by LoR is computing so-called boundary formulas
computing supersets of states reachable only in \ksr.  Formula
$H_j(S_j)$ is called \tb{boundary} for the pair (\Ks, \Ksr) if it
\begin{itemize}
\item evaluates to 0 for every state that
is reachable in \Ksr but not in \Ks in $j$ transitions
\item evaluates to 1 for every state that is reachable in system \Ks
  (and hence in \Ksr) in $j$ transitions
\end{itemize}
The value of $H_j$ is not specified for a state that is unreachable in
\Ksr (and hence in \ks) in $j$ transitions. On the one hand, $H_j$ can
be viewed as a ``boundary'' between sets of states reachable in \Ks
and \Ksr in $j$ transitions, hence the name. On the other hand, since
every $\overline{H_j}$-state is unreachable in \Ks in $j$ transitions,
the $H_j$-states form an over-approximation of the set of states
reachable in \Ks in $j$ transitions.

\subsection{Transition relation relaxation}
Let us show how one can use transition relation relaxation to build
boundary formula $H_1$. The latter gives an over-approximation of the
set of states reachable in \Ks in one transition. Suppose that no bad
state is reachable from an $I$-state of \Ks in one transition.  Let
\pnt{s} be a bad state. Since \pnt{s} is unreachable from an $I$-state
in one transition, formula $I_0 \wedge T_{0,1} \wedge \cof{C}{s}$ is
unsatisfiable. Here $I_0$ denotes $I(S_0)$ and \cof{C}{s} is the
longest clause falsified by \pnt{s}. (A \tb{clause} is a disjunction
of literals).  Let us relax $T_{0,1}$ to make state \pnt{s}
reachable. This means finding a formula \trx{0,1} implied by $T_{0,1}$
that makes $I_0 \wedge \trx{0,1} \wedge \cof{C}{s}$ satisfiable. Let
$R_{0,1}$ be a formula such that $T_{0,1} \equiv \trx{0,1} \wedge
R_{0,1}$ i.e. $R_{0,1}$ specifies the ``difference'' between the
transition relations. In the simplest case, $R_{0,1}$ is just a subset
of clauses of $T_{0,1}$ and so \trx{0,1} is obtained from $T_{0,1}$ by
removing the clauses of $R_{0,1}$.  (In this paper, we use the notion
of a CNF formula and that of a set of clauses interchangeably.)

Boundary formula $H_1$ is built by excluding states reachable only by
the relaxed system \ksr, specified by \trx{0,1}, in one
transition. Initially, $H_1$ is an empty set of clauses that
represents a constant 1.  Let $G(S_1)$ be a formula such that
$\prob{W_0}{I_0 \wedge \trx{0,1} \wedge R_{0,1}} \equiv G \wedge
\prob{W_0}{I_0 \wedge \trx{0,1}}$ where $W_0 = X_0 \wedge Y_0 \wedge
S_0$. Finding $G$ comes down to solving the Partial Quantifier
Elimination (PQE) problem. (Only a part of the quantified formula
leaves the scope of quantifiers, hence the name.)  States falsifying
$G$ is a superset of states reachable with transition relation
\trx{0,1} but not with $T_{0,1}$. In particular, $G$ is falsified by
state \pnt{s}. The clauses of $G$ are added to $H_1$. If \impl{H_1}{P}
holds, then $H_1$ is an over-approximation of the set of states
reachable in \Ks in one transition. Otherwise, there is a bad state
\pnt{s} for which formula $I_0 \wedge \trx{0,1} \wedge H_1 \wedge
\cof{C}{s}$ is unsatisfiable. Then the procedure above is applied
again. That is transition relation \trx{0,1} is relaxed even more and
a new formula $G$ falsified by \pnt{s} is derived that makes up for
this new relaxation. The set of clauses of $G$ is added to $H_1$. This
goes on until \impl{H_1}{P} holds.

\subsection{A high-level view of \Alg}
In this paper, we formulate a an algorithm of PC by LoR called \alg.
To check if a property $P$ holds, \Alg computes a sequence of boundary
formulas $H_1,\dots,H_j$ that satisfy properties similar to those
maintained in IC3~\cite{ic3}.  However these formulas are derived by
employing transition relation relaxation and PQE rather than inductive
clauses. Maintaining IC3-like properties is just a convenient way to
guarantee that \Alg converges. If $P$ holds in \ks, then eventually
logically equivalent boundary formulas $H_j$ and $H_{j+1}$ are
produced, meaning that $H_j$ is an inductive invariant. Otherwise,
\Alg fails to build a boundary formula $H_j$ implying property $P$ and
finds a counterexample instead.

We also describe a version of \Alg that combines LoR with derivation
of inductive clauses~\cite{ic3}. In IC3, a formula $F_j$
over-approximating the set of states reachable in $j$ transitions is
built by tightening $P$. This tightening is done by adding to $F_j$
inductive clauses excluding $F_j$-states from which a bad state is
reachable in one transition. Such an approach may converge too slowly
if an inductive invariant is ``far'' from property $P$. The idea of
combining LoR with derivation of inductive clauses is as follows. The
original boundary formula $H_i$ is built by relaxation.  The future
corrections of $H_i$ (done to maintain the IC3-like properties we
mentioned above) are performed by tightening $H_i$ up by inductive
clauses.  Such an approach can drastically speed up building an
inductive invariant that is far from the property.

%
\subsection{Merits of PC by LoR}

This paper is motivated by some nice features of PC by LoR listed
below.  Since PC by LoR heavily relies on existence of efficient PQE
solvers, realization of these features requires a boost in the
performance of current PQE algorithms. We believe that this can be
achieved via implementing some crucial
techniques~\cite{tech_rep_ec_lor} that PQE solvers still lack. So
getting the required performance of PQE is just a matter of time.

Our interest in PC by LoR is twofold.  First, PC by LoR derives an
inductive invariant (or a counterexample) by computing the difference
between the original and relaxed transition systems. So, in a sense,
the complexity of PC becomes \ti{relative} since it depends on how
different the original and relaxed systems are.  This is analogous to
equivalence checking whose complexity strongly depends on how similar
the designs to be compared are. Second, by using a particular
relaxation scheme one can take into account system and property
structure/semantics. Suppose, for instance, that one needs to check a
property $P$ of a system \Ks induced by interaction of two its
subsystems $\ks'$ and $\ks''$. Intuitively, an inductive invariant can
be constructed by computing the difference between \Ks and a relaxed
system obtained from \Ks by removing the interaction between $\ks'$
and $\ks''$. If $P$ holds, then bad states are reachable only in the
relaxed system. That is the knowledge of problem semantics may help to
generate an inductive invariant faster.  We show how this idea works
for equivalence checking (Section~\ref{sec:exmp}).

%
%
\subsection{Contributions and structure of the paper}
The contribution of this paper is threefold. First, we introduce a new
framework for PC. It is based on the idea of using transition relation
relaxation and PQE to build an over-approximation of the set of
reachable states.  Second, we formulate a PC algorithm based on this
idea and prove its correctness. Third, we formulate a PC algorithm
combining transition relation relaxation with the machinery of
inductive clauses.

The remainder of the paper is structured as follows. An example of PC
by LoR is described in Section~\ref{sec:exmp}. Basic definitions are
given in Section~\ref{sec:definitions}.  Boundary formulas are
discussed in Section~\ref{sec:bnd_form}.  We describe \Alg in
Section~\ref{sec:alg_descr}. Section ~\ref{sec:two_mods} discusses two
important modifications of \alg. One of these modifications describes
combining LoR with the machinery of inductive clauses. Some
conclusions are given in Section~\ref{sec:conclusions}.

\section{An Example}
\label{sec:exmp}
In this section, we consider a special case of PC: equivalence
checking of two identical sequential circuits. In
Subsection~\ref{subsec:exmp_descr}, we describe the example we
consider. The problems with solving this example by interpolation and
IC3 are discussed in Subsection~\ref{subsec:ec_by_ii}.  Application of
PC by LoR to this example is described in
Subsection~\ref{subsec:ec_by_lor}.  In particular, such application
shows that by picking transition relation relaxation one can tailor a
PC algorithm to the problem at hand.

%
%
\subsection{Example description}
\label{subsec:exmp_descr}
Let $T^N(X^N,Y^N,S^N,S'^N)$ be a formula specifying the transition
relation of a sequential circuit $N$. Here $X^N,Y^N$ are the sets of
input and internal variables of $N$ respectively and $S^N, S'^N$ are
the sets of present and next state variables of $N$ respectively.  Let
$I^N$ be a formula specifying the initial states of $N$. Let circuit
$K$ be an identical copy of $N$. Let $T^K(X^K,Y^K,S^K,S'^K)$ and $I^K$
be formulas specifying the transition relation and initial states of
$K$. Suppose that one needs to verify equivalence of $K$ and $N$
defined as follows. $K$ and $N$ produce the same sequence of outputs
for an identical sequence of values of $X^N$ and $X^K$ if they start
in the same $I$-state.

The equivalence of $N$ and $K$ can be checked via building a
sequential circuit $M$ called a \tb{miter} that is composed of $N$ and
$K$ as shown in Figure~\ref{fig:sec}. Let $T(X,Y,S,S')=T^N \wedge T^K
\wedge \mi{EQ}(X^N,X^K)$ where $X = X^N \cup X^K$, $Y = Y^N \cup Y^K$,
$S = S^N \cup S^K$, $S' = S'^N \cup S'^K$. Given assignments \pnt{x^N}
and \pnt{x^K} to $X^K$ and $X^N$ respectively,
$\mi{EQ}(\pnt{x^N},\pnt{x^K})=1$ iff \pnt{x^N} = \pnt{x^K} . Formula
$T$ specifies the transition relation of miter $M$.  Formula
$I(S^N,S^K)$ specify the initial states of miter $M$ where
$I(\pnt{s^N},\pnt{s^K})=1$ iff \pnt{s^N} = \pnt{s^K} and
$I^N(\pnt{s^N})=I^K(\pnt{s^K})=1$.  Note that the output variable $z$
of $M$ evaluates to 1 in \mbox{$j$-th} time frame iff $N$ and $K$
produce different assignments to output variables $Z^N$ and $Z^K$.  So
proving the equivalence of $N$ and $K$ comes down to showing that the
output $z$ of miter $M$ evaluates to 0 in every time frame.  This can
be done by proving that the following property $P(S^N,S^K)$ of $M$
holds.  $P(\pnt{s^N},\pnt{s^K})=1$ iff $N$ and $K$ produce the same
outputs in states \pnt{s^N} and \pnt{s^K} for every assignment
\pnt{x^N} to $X^N$ and \pnt{x^K} to $X^K$ such that \pnt{x^N} =
\pnt{x^K}.

Since circuits $N$ and $K$ are identical, the output $z$ of $M$
evaluates to 0 for every state \pnt{s} = (\pnt{s^N},\pnt{s^K}) where
\pnt{s^N} = \pnt{s^K}. So \impl{\mi{EQ(S^N,S^K)}}{P}. However, in
general, the reverse implication does not hold because $N$ and $K$ can
produce the same output even in a state \pnt{s} where $\pnt{s^N} \neq
\pnt{s^K}$. Note that $\mi{EQ}(S^N,S^K) \wedge T \rightarrow
\mi{EQ}(S'^N,S'^K)$ holds. So $\mi{EQ}(S^N,S^K)$ is an inductive
invariant.

\setlength{\intextsep}{4pt}
\begin{figure} 
 \begin{center}
    \includegraphics[width=2.6in]{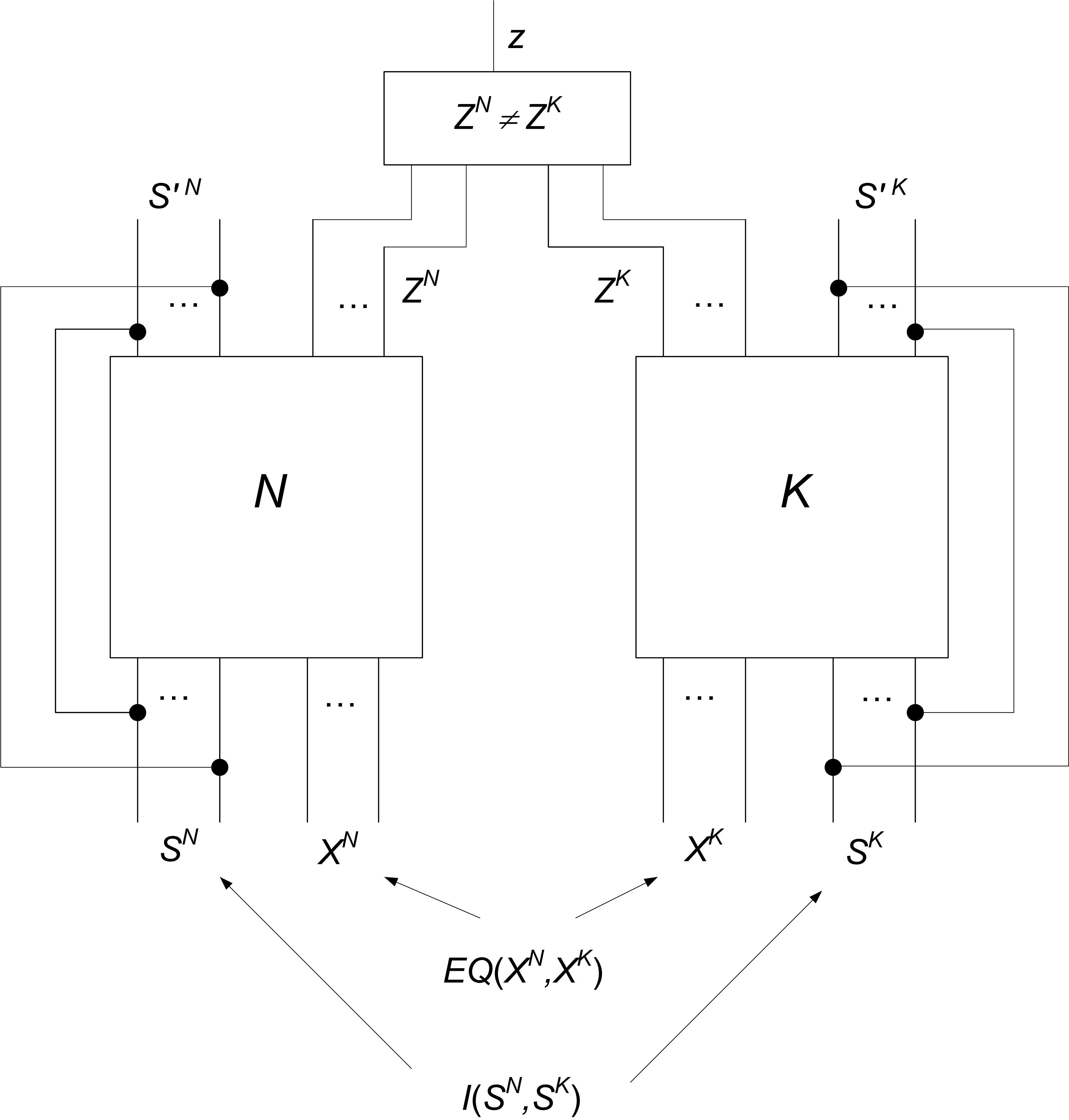}
  \end{center}
\vspace{-10pt}
\caption{Miter $M$ of sequential circuits $N$ and $K$}
\vspace{5pt}
\label{fig:sec}
\end{figure}

%
%
\subsection{Solving example by interpolation and IC3}
\label{subsec:ec_by_ii}

Our example can be trivially solved by a method that tries to prove
equivalence of corresponding state variables of $N$ and $K$. However,
such a method is unrobust since it can not be extended to the case
where $N$ and $K$ are structurally close but not identical (e.g.  if
$N$ does not have state variables that are functionally equivalent to
variables of $K$.) So it is interesting to analyze solving our example
by a general method that does not use pre-processing to identify
equivalent state variables.

One can argue that checking the equivalence of two identical circuits
can be hard for an interpolation based method.  The performance of
such a method strongly depends on the quality of an interpolant
extracted from a proof that $P$ holds for a limited number of
transitions. Such a proof is produced by a general-purpose SAT-solver
based on conflict clause learning. A known fact is that such solvers
generate proofs of poor quality on equivalence checking
formulas~\cite{sat09,tech_rep_ec_lor}. This leads to producing
interpolants of poor quality and hence slow convergence.

IC3 builds an inductive invariant by tightening property $P$ via
adding inductive clauses. Intuitively, the convergence rate of such a
strategy strongly depends on how ``far'' an inductive invariant is
from $P$. Consider, for instance, the inductive invariant
$\mi{EQ}(S^N,S^K)$. In general, $\mi{EQ}$ can be arbitrarily far from
$P$ especially if the transition system specified by $N$ is deep and
$N$ can produce the same output in different states.

%
%
\subsection{Solving example by LoR}
\label{subsec:ec_by_lor}
Let us consider how our example is solved by LoR. As we mentioned in
Subsection~\ref{subsec:pc_by_lor}, the basic operation of PC by LoR is
to compute a superset of the set of states reachable in $j$
transitions only by the relaxed system.  Importantly, this superset is
different from the precise set of reachable states only by the states
(bad or good) that are \ti{unreachable} by the relaxed system and
hence by the original system. The objective here is to make sure that
this superset contains all the bad states. Let us show how this is
done for our example for the initial time frame.  Let \Ks denote the
transition system specified by miter $M$ and initial set of states
$I$.  Recall that transition relation $T_{0,1}$ of \Ks is specified by
$T^N_{0,1} \wedge T^K_{0,1} \wedge \mi{EQ}^X_0$ where $\mi{EQ}^X_0$
denote $\mi{EQ}(X^N_0,X^K_0)$.

Let the relaxed transition \trx{0,1} for the initial time frame be
obtained by dropping the clauses of $\mi{EQ}^X_0$ i.e.  \trx{0,1} =
$T^N_{0,1} \wedge T^K_{0,1}$.  Let \Ksr denote the version of \Ks
where $T_{0,1}$ is replaced with \trx{0,1}.  In \ksr, one can apply
different input assignments to $N$ and $K$. So \Ksr can transition to
states \pnt{s}=(\pnt{s^N},\pnt{s^K}) where $\pnt{s^N} \neq \pnt{s^K}$
and thus potentially reach bad states in one transition. Let us
compute a boundary formula $H_1$. As we mentioned in
Subsection~\ref{subsec:bnd_forms},
\begin{itemize}
\item the $\overline{H_1}$-states specify a superset of the set of
states reachable in one transition only in \Ksr and
\item the $H_1$-states is an over-approximation of the set of states
  reachable in \Ks in one transition.
\end{itemize}
As we show in Section~\ref{sec:bnd_form}, $H_1$ can be found as a
formula for which $\prob{W_0}{I_0 \wedge T_{0,1}} \equiv H_1 \wedge
\prob{W_0}{I_0 \wedge \trx{0,1}}$ holds where $W_0 = X_0 \wedge Y_0
\wedge S_0$.  From Lemma~\ref{lemma:ec} proved in the appendix it
follows that formula $H_1$ equal to $\mi{EQ}(S^N_1,S^K_1)$ satisfies
the equality above. That is $\mi{EQ}(S^N_1,S^K_1)$ can be used as a
boundary formula $H_1$.

As we mentioned earlier, formula $\mi{EQ}(S^K,S^N)$ is an inductive
invariant. So by using a relaxed transition relation \trx{0,1} and
building a boundary formula $H_1$ separating \Ks and \Ksr one
generates an inductive invariant. Such fast convergence is not a
result of pure luck. The choice of relaxation above has a very simple
explanation. Miter $M$ consists of circuits $N$ and $K$
``interacting'' with each other via combinational input
variables. Circuits $N$ and $K$ interact correctly if the output of
$M$ is always 0. Intuitively, to verify that $N$ and $K$ interact
correctly one needs to compute the difference between the original
miter and a relaxed one where communication between $N$ and $K$ is cut
off. If $N$ and $K$ are equivalent, miter $M$ can produce output 1
only if $N$ and $K$ do not talk with each other. In
Subsection~\ref{subsec:lor_ic3}, we argue that such relaxation can be
successfully used in a general algorithm of sequential equivalence
checking.

\section{Basic Definitions}
\label{sec:definitions}
%
%
\begin{definition}
Let \Ks be a transition system specified by transition relation
$T(S,X,Y,S')$ introduced in Subsection~\ref{subsec:problem}. A
sequence of states \tr{s}{m}{j} is called a \tb{trace}.  This trace is
called \tb{valid} if \Prob{X}{Y}{T(\pnt{s_k},X,Y,\pnt{s_{k+1}})} = 1,
$k=m,\dots$,\mbox{$j-1$}.
\end{definition}
%
%
\begin{definition}
Let $I$ specify the initial states of system \ks. Given a property $P$
of \ks, a valid trace \tr{s}{0}{j} is called a \tb{counterexample} if
$I(\pnt{s_0}) = 1$,
$P(\pnt{s_k})=1$,$k=0,\dots$,\mbox{$j-1$},$P(\pnt{s_j})=0$.
\end{definition}
%
%
\begin{definition}
\label{def:relax}
Let $\Ks$ and $\Et$ be two transition systems depending on the same
set of variables $S,X,Y,S'$. We will say that $\Et$ is a
\tb{relaxation} of system $\Ks$ if the set of valid traces of the
former is a superset of that of the latter.
\end{definition}
%
%
\begin{definition}
\label{def:tr_relax}
Let \Ks be a system specified by transition relation $T$ and formula
$I$ specifying initial states. Denote by {\boldmath \ksrr{j}} a
relaxation of \Ks such that
\begin{itemize}
\item \Ks and  \ksrr{j} have identical sets of initial states and
\item $T_{k,k+1} \equiv \trx{k,k+1}$, $k \neq j$ and
\impl{T_{j,j+1}}{\trx{j,j+1}}
\end{itemize}
\end{definition}
%
%

In this paper, by a quantified formula we mean one with
\ti{existential} quantifiers.  Given a quantified formula
\prob{W}{A(V,W)}, the problem of quantifier elimination is to find a
quantifier-free formula $A^*(V)$ such that $A^* \equiv \prob{W}{A}$.
Given a quantified formula \prob{W}{A(V,W) \wedge B(V,W)}, the problem
of \tb{Partial Quantifier Elimination} (\tb{PQE}) is to find a
quantifier-free formula $A^*(V)$ such that $A^* \wedge \prob{W}{B}
\equiv \prob{W}{A \wedge B}$.  Note that formula $B$ remains
quantified (hence the name \ti{partial} quantifier elimination). We
will say that formula $A^*$ is obtained by \tb{taking} \pnt{A} \tb{out
  of the scope of quantifiers} in \prob{W}{A \wedge B}. Importantly,
there is a strong relation between PQE and the notion of
\ti{redundancy} of a clause in a quantified formula. For instance,
solving the PQE problem above comes down to finding a set of clauses
$A^*(V)$ implied by $A \wedge B$ that makes the clauses of $A$
redundant in $A^* \wedge \prob{W}{A \wedge B}$. That is $A^* \wedge
\prob{W}{A \wedge B} \equiv A^* \wedge \prob{W}{B}$.

\section{Boundary Formulas}
\label{sec:bnd_form}
In this section, we present boundary formulas. In
Subsection~\ref{subsec:bf_intro} we define boundary formulas and
explain their relation to PQE. Building boundary formulas inductively
is described in Subsection~\ref{subsec:bf_induct}.
%
%
\subsection{Definition of boundary formulas and their relation to PQE}
\label{subsec:bf_intro}
%
%
\begin{definition}
\label{def:bnd_form}
Let \Ksr be a relaxation of system \Ks and $P$ be a property of
\ks. Formula $H_j$ is called \tb{boundary} for the pair (\ks, \ksr) if
\begin{enumerate}
\item $H_j(\pnt{s})=0$, for every state \pnt{s} that is reachable in
  \Ksr and unreachable in \Ks in $j$ transitions
\item $H_j(\pnt{s})=1$, for every state \pnt{s} that is reachable in
  \Ks (and hence in \ksr) in $j$ transitions
\end{enumerate}
\end{definition}
 Boundary formula $H_j$ specifies the set of states reachable only by
 \Ksr i.e. separates \Ks and \Ksr (hence the name ``boundary'').  We
 will say that $H_j$ is just a boundary formula if the corresponding
 relaxation is obvious from the context.

%
%
Proposition~\ref{prop:bnd_form} below gives a sufficient condition for
a formula to be boundary.  Let system \ksrr{j} be obtained by relaxing
only the transition relation of $j$-th time frame (see
Definition~\ref{def:tr_relax}).  Let $I_0$ denote $I(S_0)$. Let
\Abs{W}{j-1} denote $W_0 \cup \dots \cup W_{j-1}$ where $W_i = S_i
\cup X_i \cup Y_i$, $i=0,\dots,j-1$. Let \Abs{T}{j} denote $T_{0,1}
\wedge \dots \wedge T_{j-1,j}$.  Let \ttrx{j} denote $\Abs{T}{j-1}
\wedge \trx{j-1,j}$.
\begin{proposition}
\label{prop:bnd_form}
 Let $H_j$ be a formula (depending only on variables of $j$-th cut)
 such that $\prob{\Abs{W}{j-1}}{I_0 \wedge \Abs{T}{j}} \equiv H_j
 \wedge \prob{\Abs{W}{j-1}}{I_0 \wedge \ttrx{j}}$.  Then $H_j$ is a
 boundary formula for the pair (\ks, \ksrr{j}).
\end{proposition}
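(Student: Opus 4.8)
The plan is to read each side of the hypothesised PQE equality as a Boolean function of the $j$-th cut $S_j$ --- legitimate because $H_j$ depends only on $S_j$ and, on either side, every variable other than those of the $j$-th cut is quantified out --- to recognise each such function as the characteristic function of a set of states reachable in $j$ transitions, and then to read off the two conditions of Definition~\ref{def:bnd_form} by evaluating the equality at an arbitrary state \pnt{s}.

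\textbf{Step 1 (reachability as a block quantifier).} First I would prove that, for any assignment \pnt{s} to $S_j$, the formula $\prob{\Abs{W}{j-1}}{I_0 \wedge \Abs{T}{j}}$ evaluates to $1$ under $S_j = \pnt{s}$ iff \pnt{s} is reachable from an initial state of \Ks in $j$ transitions. The argument is syntactic. In \Abs{T}{j} $= T_{0,1} \wedge \dots \wedge T_{j-1,j}$ the block $X_k \cup Y_k$ occurs only in the conjunct $T_{k,k+1}$, and the blocks $W_0, \dots, W_{j-1}$ are pairwise disjoint; hence in $\prob{\Abs{W}{j-1}}{I_0 \wedge \Abs{T}{j}}$ only the existentials over the initial and intermediate cuts, $\exists S_0 \dots \exists S_{j-1}$, bind across conjuncts, while each $\exists X_k \exists Y_k$ can be confined to the single conjunct $T_{k,k+1}$, giving exactly Definition~1's validity condition for the transition from cut $k$ to cut $k+1$. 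So, evaluated at $S_j = \pnt{s}$, the formula asserts that there is a valid trace $(\pnt{s_0}, \dots, \pnt{s_{j-1}}, \pnt{s})$ with $I(\pnt{s_0}) = 1$, i.e. that \pnt{s} is reachable in \Ks in $j$ transitions. Now \ttrx{j}, which is $\Abs{T}{j-1} \wedge \trx{j-1,j}$, is the unrolled transition relation of the first $j$ time frames of \ksrr{j}: it differs from \Abs{T}{j} only in the last conjunct, where $T_{j-1,j}$ (which implies $\trx{j-1,j}$) is replaced by $\trx{j-1,j}$. The identical argument then shows that $\prob{\Abs{W}{j-1}}{I_0 \wedge \ttrx{j}}$ evaluates to $1$ under $S_j = \pnt{s}$ iff \pnt{s} is reachable in \ksrr{j} in $j$ transitions.

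\textbf{Step 2 (reading off the conditions).} From the hypothesised equality, $\prob{\Abs{W}{j-1}}{I_0 \wedge \Abs{T}{j}}$ implies $H_j$; by Step~1 its antecedent holds exactly at the states reachable in \Ks in $j$ transitions, so $H_j$ holds at every such state, which is condition~2 of Definition~\ref{def:bnd_form}. Conversely, let \pnt{s} be reachable in \ksrr{j} but not in \Ks in $j$ transitions. By Step~1, $\prob{\Abs{W}{j-1}}{I_0 \wedge \ttrx{j}}$ is true at \pnt{s} while $\prob{\Abs{W}{j-1}}{I_0 \wedge \Abs{T}{j}}$ is false at \pnt{s}; since the equality reads $\prob{\Abs{W}{j-1}}{I_0 \wedge \Abs{T}{j}} \equiv H_j \wedge \prob{\Abs{W}{j-1}}{I_0 \wedge \ttrx{j}}$, this forces $H_j(\pnt{s}) = 0$, which is condition~1. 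Hence $H_j$ is a boundary formula for the pair (\ks, \ksrr{j}); consistently with Definition~\ref{def:bnd_form}, nothing is asserted about the value of $H_j$ at states unreachable in \ksrr{j}.

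I expect Step~1 to be the only part needing care: one must make explicit that the single block quantifier over \Abs{W}{j-1} computes iterated one-transition reachability, which rests on the disjointness of the per-frame copies of the input and internal variables (so that each $\exists X_k \exists Y_k$ can be pushed into its conjunct and match Definition~1's per-transition quantifiers) and on the fact that relaxing the $j$-th time frame leaves \ttrx{j} equal to the length-$j$ transition relation of \ksrr{j}. Everything after that is a one-line Boolean reading of the PQE equality.
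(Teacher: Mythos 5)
Your proposal is correct and is essentially the paper's argument: the paper proves the same two facts by assuming $H_j$ is not boundary and deriving a contradiction in each of the two failure cases, relying implicitly on exactly the reachability reading of $\prob{\Abs{W}{j-1}}{I_0 \wedge \Abs{T}{j}}$ and $\prob{\Abs{W}{j-1}}{I_0 \wedge \ttrx{j}}$ that you spell out in Step~1. Your direct formulation merely makes that semantic characterization explicit and reads off conditions~1 and~2 of Definition~\ref{def:bnd_form} without the contrapositive framing, so no substantive difference or gap.
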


Proofs of the propositions are given in the appendix.

%
%
\begin{proposition}
\label{prop:bnd_form_pqe}
Let $T_{j-1,j} = \trx{j-1,j} \wedge R_{j-1,j}$. Let
$H_j$ be a formula such that $\prob{\Abs{W}{j-1}}{I_0 \wedge
  \ttrx{j} \wedge R_{j-1,j}} \equiv H_j \wedge
\prob{\Abs{W}{j-1}}{I_0 \wedge \ttrx{j}}$. Then
$H_j$ is a boundary formula for the pair (\ks, \ksrr{j}).
\end{proposition}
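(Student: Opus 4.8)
The plan is to reduce Proposition~\ref{prop:bnd_form_pqe} to Proposition~\ref{prop:bnd_form}, which has already been established. The key observation is that the hypothesis of Proposition~\ref{prop:bnd_form_pqe} is essentially the hypothesis of Proposition~\ref{prop:bnd_form} rewritten after substituting the factorization $T_{j-1,j} = \trx{j-1,j} \wedge R_{j-1,j}$. So the main work is a syntactic/semantic manipulation showing the two hypotheses are equivalent, after which the conclusion is immediate.

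First I would rewrite $\Abs{T}{j}$ using the factorization. By definition $\Abs{T}{j} = T_{0,1} \wedge \dots \wedge T_{j-1,j}$, and since $T_{j-1,j} = \trx{j-1,j} \wedge R_{j-1,j}$, we get $\Abs{T}{j} = T_{0,1} \wedge \dots \wedge T_{j-2,j-1} \wedge \trx{j-1,j} \wedge R_{j-1,j}$. On the other hand, $\ttrx{j} = \Abs{T}{j-1} \wedge \trx{j-1,j} = T_{0,1} \wedge \dots \wedge T_{j-2,j-1} \wedge \trx{j-1,j}$. Hence $\Abs{T}{j} \equiv \ttrx{j} \wedge R_{j-1,j}$. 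Substituting this into the left-hand side of the hypothesis of Proposition~\ref{prop:bnd_form}, namely $\prob{\Abs{W}{j-1}}{I_0 \wedge \Abs{T}{j}}$, yields exactly $\prob{\Abs{W}{j-1}}{I_0 \wedge \ttrx{j} \wedge R_{j-1,j}}$, which is the left-hand side appearing in Proposition~\ref{prop:bnd_form_pqe}. The right-hand sides, $H_j \wedge \prob{\Abs{W}{j-1}}{I_0 \wedge \ttrx{j}}$, are already identical in the two statements.

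Therefore the hypothesis assumed in Proposition~\ref{prop:bnd_form_pqe} is logically the same as the hypothesis of Proposition~\ref{prop:bnd_form} for the same relaxed system \ksrr{j} (the relaxation is unchanged: \trx{j-1,j} is the transition relation of the $j$-th time frame, and $R_{j-1,j}$ merely names the ``difference'' clauses, so \impl{T_{j-1,j}}{\trx{j-1,j}} still holds, making \ksrr{j} a legitimate relaxation in the sense of Definition~\ref{def:tr_relax}). Applying Proposition~\ref{prop:bnd_form} then gives directly that $H_j$ is a boundary formula for the pair (\ks, \ksrr{j}), which is the desired conclusion.

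The only subtlety — and the one point I would be careful about — is making sure the equivalence $\Abs{T}{j} \equiv \ttrx{j} \wedge R_{j-1,j}$ is used correctly inside the scope of the existential quantifier: since the two formulas are logically equivalent (not merely equisatisfiable), replacing one by the other under $\exists$ preserves the quantified formula, so the substitution is valid. There is no real obstacle here; the proposition is essentially a restatement of Proposition~\ref{prop:bnd_form} phrased in terms of the clause set $R_{j-1,j}$ that a PQE solver would actually take out of the scope of quantifiers, and the proof is a two-line reduction.
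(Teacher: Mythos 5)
Your proof is correct and follows essentially the same route as the paper: both observe that the factorization $T_{j-1,j} = \trx{j-1,j} \wedge R_{j-1,j}$ gives $\Abs{T}{j} \equiv \ttrx{j} \wedge R_{j-1,j}$, so the hypothesis of Proposition~\ref{prop:bnd_form_pqe} coincides with that of Proposition~\ref{prop:bnd_form}, which then yields the conclusion. Your extra remarks on substitution under the quantifier and on \ksrr{j} remaining a legitimate relaxation are sound but not needed beyond what the paper states.
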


One can view $R_{j-1,j}$ as a formula specifying the ``difference''
between $T_{j-1,j}$ and \trx{j-1,j}
Proposition~\ref{prop:bnd_form_pqe} suggests that $H_j$ can be
obtained by taking $R_{j-1,j}$ out of the scope of quantifiers i.e. by
PQE.

%
%
\subsection{Building boundary formulas inductively}
\label{subsec:bf_induct}
Proposition~\ref{prop:bnd_form} suggests that adding a boundary
formula $H_j$ makes up for the difference between $T_{j-1,j}$ and
\trx{j-1,j}. Suppose that one relaxes transition relation in every
time frame. Let \Ttrx{j} denote $\trx{0,1} \wedge \dots \wedge
\trx{j-1,j}$.  Let \Abs{H}{j} denote $H_0 \wedge \dots \wedge H_j$
where $H_0 = I$ and $H_1,\dots,H_j$ are boundary formulas.  Then
the following proposition is true.
%
%
\begin{proposition}
\label{prop:ind_bnd_form}
$\prob{\Abs{W}{j-1}}{ I_0 \wedge \Abs{T}{j}} \equiv$
$\exists{\Abs{W}{j-1}}[\Abs{H}{j} \wedge $ $\Ttrx{j}]$.
\end{proposition}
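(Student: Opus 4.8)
The plan is to prove the equivalence by induction on $j$, letting Proposition~\ref{prop:bnd_form}, applied at one index at a time, do the work of ``absorbing'' one additional time frame at each step. Two elementary support facts will be used repeatedly: (a) each boundary formula $H_i$ depends only on the variables $S_i$ of the $i$-th cut, hence contains no variable of $\Abs{W}{i-1}$; and (b) the relaxed relation $\trx{k,k+1}$ depends only on the variables of $W_k$ together with $S_{k+1}$, hence contains no variable of $\Abs{W}{k-1}$. Facts (a) and (b) are exactly what licenses moving existential quantifiers in and out across the relevant conjuncts. I also take each $H_i$ appearing in $\Abs{H}{j}$ to be a formula built as in Proposition~\ref{prop:bnd_form} for the pair (\ks, \ksrr{i}), i.e. one satisfying $\prob{\Abs{W}{i-1}}{I_0 \wedge \Abs{T}{i}} \equiv H_i \wedge \prob{\Abs{W}{i-1}}{I_0 \wedge \ttrx{i}}$ (recall $\ttrx{i} = \Abs{T}{i-1} \wedge \trx{i-1,i}$); it is this PQE-style equivalence, rather than merely the set-theoretic property of Definition~\ref{def:bnd_form}, that the induction consumes.

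The base case $j=0$ is the triviality $I_0 \equiv I_0$ (empty quantifier prefix, $\Abs{T}{0} = \Ttrx{0} = 1$, $\Abs{H}{0} = I_0$). For the inductive step, assume $\prob{\Abs{W}{j-1}}{I_0 \wedge \Abs{T}{j}} \equiv \prob{\Abs{W}{j-1}}{\Abs{H}{j} \wedge \Ttrx{j}}$. Apply Proposition~\ref{prop:bnd_form} at index $j{+}1$, using $\ttrx{j+1} = \Abs{T}{j} \wedge \trx{j,j+1}$, to get $\prob{\Abs{W}{j}}{I_0 \wedge \Abs{T}{j+1}} \equiv H_{j+1} \wedge \prob{\Abs{W}{j}}{I_0 \wedge \Abs{T}{j} \wedge \trx{j,j+1}}$. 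Now split $\Abs{W}{j} = \Abs{W}{j-1} \cup W_j$; by fact (b) with $k=j$ no variable of $\Abs{W}{j-1}$ occurs in $\trx{j,j+1}$, so I may pull $\exists\Abs{W}{j-1}$ inward past $\trx{j,j+1}$, rewrite the resulting subformula $\prob{\Abs{W}{j-1}}{I_0 \wedge \Abs{T}{j}}$ by the induction hypothesis, and pull $\exists\Abs{W}{j-1}$ back out, obtaining $\prob{\Abs{W}{j}}{\Abs{H}{j} \wedge \Ttrx{j} \wedge \trx{j,j+1}} = \prob{\Abs{W}{j}}{\Abs{H}{j} \wedge \Ttrx{j+1}}$. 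Finally, fact (a) says $H_{j+1}$ shares no variable with $\Abs{W}{j}$, so it moves under the quantifier, giving $\prob{\Abs{W}{j}}{I_0 \wedge \Abs{T}{j+1}} \equiv \prob{\Abs{W}{j}}{\Abs{H}{j+1} \wedge \Ttrx{j+1}}$, which closes the induction.

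The one place needing genuine care — the ``hard part'' — is the variable-support bookkeeping behind the quantifier shuffling: one must check precisely that $\trx{j,j+1}$ and $H_{j+1}$ share no variable with $\Abs{W}{j-1}$, respectively $\Abs{W}{j}$, and that replacing a subformula by a logically equivalent one underneath a chain of existential quantifiers is sound here. It is sound, because each equivalence invoked holds as an identity of formulas over exactly the free variables that survive the inner quantifier block — namely the variables $S_j$ of the $j$-th cut — and $\exists$ is monotone with respect to $\equiv$. Everything else (cancellation of empty conjunctions and empty prefixes, associativity and commutativity of $\wedge$, and commuting $\exists$ with a conjunct it shares no variable with) is routine. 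The only additional expository obligation is to state upfront the reading of ``boundary formula'' noted above, since the argument uses the equivalence of Proposition~\ref{prop:bnd_form} rather than Definition~\ref{def:bnd_form} directly.
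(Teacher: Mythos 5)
Your argument is sound under the reading you state, and mechanically it is very close to the paper's: both proofs are inductions over time frames whose only content is quantifier-scope bookkeeping plus one PQE-style equivalence consumed per step. The difference is \emph{which} equivalence gets consumed. You assume each $H_i$ solves the PQE problem of Proposition~\ref{prop:bnd_form}, i.e. $\prob{\Abs{W}{i-1}}{I_0 \wedge \Abs{T}{i}} \equiv H_i \wedge \prob{\Abs{W}{i-1}}{I_0 \wedge \ttrx{i}}$ (unrelaxed prefix, one relaxed frame), apply that equivalence first at index $j{+}1$, and then rewrite the leftover unrelaxed prefix via the induction hypothesis. The paper instead takes $H_{j+1}$ to be the formula produced by the inductive construction of Subsection~\ref{subsec:bf_induct}: $H_{j+1}$ is obtained by taking $R_{j,j+1}$ out of the scope of quantifiers in $\prob{\Abs{W}{j}}{\Abs{H}{j} \wedge \Ttrx{j} \wedge \trx{j,j+1} \wedge R_{j,j+1}}$ (relaxed prefix), so after rewriting $\prob{\Abs{W}{j-1}}{I_0 \wedge \Abs{T}{j}}$ by the induction hypothesis and splitting $T_{j,j+1} = \trx{j,j+1} \wedge R_{j,j+1}$, the defining equivalence of $H_{j+1}$ closes the step directly. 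These two hypotheses are not interchangeable a priori, and the paper's choice matters downstream: Proposition~\ref{prop:corr_bfs} proves that the inductively built formulas \emph{do} satisfy the equivalence of Proposition~\ref{prop:bnd_form}, and its proof invokes Proposition~\ref{prop:ind_bnd_form}; if the latter were only available under your hypothesis, that application would be circular, since the algorithm's formulas come from relaxed-prefix PQE calls, not from Proposition~\ref{prop:bnd_form}. So your proof is correct for the statement as you interpreted it (and your variable-support bookkeeping is right: $\trx{j,j+1}$ lives on $W_j \cup S_{j+1}$ and $H_{j+1}$ on $S_{j+1}$), but to serve the role the proposition plays in the paper you should restate the hypothesis as ``each $H_i$ is a solution of the relaxed-context PQE problem of Subsection~\ref{subsec:bf_induct}'' and observe that essentially the same quantifier shuffle then closes the induction with that equivalence in place of Proposition~\ref{prop:bnd_form}.
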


Boundary formulas $H_0,\dots,H_m$ can be built by induction using the
following procedure. Let $T_{j-1,j} = \trx{j-1,j} \wedge R_{j-1,j}$,
$j=1,\dots,m$. (We assume that $R_{j-1,j}$ is different in different
time frames.)  Formula $H_0 = I$ and formula $H_j$, $0 < j \leq m$ is
obtained by taking $R_{j-1,j}$ out of the scope of quantifiers in
formula \prob{\Abs{W}{j-1}}{\Abs{H}{j-1} \wedge \Ttrx{j} \wedge
  R_{j-1,j}}.  That is $\prob{\Abs{W}{j-1}}{\Abs{H}{j-1} \wedge
  \Ttrx{j} \wedge R_{j-1,j}} \equiv H_j \wedge
\prob{\Abs{W}{j-1}}{\Abs{H}{j-1} \wedge \Ttrx{j}}$.  The correctness
of this procedure follows from Proposition~\ref{prop:corr_bfs} of the
appendix.

Note that the greater $k$, the larger the formula in which $R_{k-1,k}$
is taken out of the scope of quantifiers. This topic is discussed
in~\cite{tech_rep_ec_lor}. There we argue the following.
In~\cite{hvc-14}, we introduced a PQE algorithm based on the machinery
of D-sequents~\cite{fmcad12,fmcad13}. The growth of formula size
mentioned above will cripple the performance of the algorithm
of~\cite{hvc-14} since the latter lacks a few crucial techniques
e.g. D-sequent re-using.  However, if a PQE solver employs D-sequent
re-using, this problem will either go away completely or at least will
be greatly mitigated.

\section{An Algorithm Of PC By LoR}
\label{sec:alg_descr}
In this section, we describe an algorithm of PC by LoR called
\alg. This algorithm is meant only for systems that have the
stuttering feature. In Subsection~\ref{subsec:stuttering}, we explain
the advantages of systems with stuttering and show how stuttering can
be introduced by a minor modification of the system at hand if the
latter does not have it. Subsections~\ref{subsec:co_conds},
~\ref{subsec:one_two_conds}, ~\ref{subsec:three_four_conds} describe
the properties of boundary formulas maintained by \Alg to guarantee
its convergence. A description of the pseudo-code of \Alg is given in
Subsections~\ref{subsec:alg_descr} and~\ref{subsec:rem_bs}. The
correctness of \Alg is proved in Subsection~\ref{subsec:correctness}.

%
%
\subsection{Stuttering}
\label{subsec:stuttering}
Suppose that one needs to check that a property $P$ of a sequential
circuit $M$ holds. Let $T$ be the transition relation specified by $M$
and \Ks be the transition system defined by $T$ and a formula $I$
specifying the initial states (see Subsection~\ref{subsec:problem}).
The \Alg algorithm described in this section is based on the
assumption that \Ks has the \tb{stuttering} feature i.e. \Ks can stay
in a given state arbitrarily long. This means that for every present
state \pnt{s}, there is an input assignment \pnt{x} such that the next
state produced by circuit $M$ is also \pnt{s}.  If \Ks does not have
this feature, one can introduce stuttering by adding to circuit $M$ a
combinational input variable $v$.  The modified circuit $M$ works as
before if $v=1$ and copies its current state to the output state
variables if $v=0$.  On the one hand, introduction of stuttering does
not affect the reachability of a bad state. On the other hand,
stuttering guarantees that \Ks has two nice properties.  First,
$\prob{W}{T(S,X,Y,S')} \equiv 1$ holds where $W = S \cup X \cup Y$.
Indeed for every next state \pnt{s'}, $T$ specifies a ``stuttering
transition'' from \pnt{s} to \pnt{s'} where \pnt{s} = \pnt{s'}.
Second, if a state is unreachable in \Ks after $n$ transitions it is
also unreachable after $m$ transitions if $m < n$.

%
%

\subsection{Four properties to guarantee convergence}
\label{subsec:co_conds}
The essence of \Alg is to build a boundary formula $H_j$ for every
time frame.  Boundary formulas are generated by \Alg one by one.  We
assume that $H_0$ is set to $I$. Let $\Abs{T}{j}$ denote $T_{0,1}
\wedge \dots \wedge T_{j-1,j}$, $j > 0$ and $\Abs{T}{0} \equiv 1$.  We
will refer to the four conditions below as \tb{CO conditions} (where CO
stands for Convergence of Over-approximations).
\begin{enumerate}
\item \impl{I}{H_j}
\item \impl{H_j}{P},
\item \impl{H_{j-1} \wedge \trx{j-1,j}}{H_j}, 
\item \impl{H_{j-1}}{H_j}, 
\end{enumerate}

(When we write formulas like \impl{I}{H_j} we assume that the sets of
variables are unified for the left and right parts of the
implication. That is \impl{I}{H_j} actually means
\impl{I(S)}{H_j(S)}.)  The CO conditions are similar to those imposed
on formulas $F_i$ specifying supersets of reachable states in
IC3~\cite{ic3}. However, formulas $H_j$ are built via relaxation of
transition relation and PQE i.e. quite differently from $F_i$ of IC3.
The convenience of the CO conditions is that no matter how formulas
satisfying these conditions are built, eventually a counterexample or
an inductive invariant are generated.
%
%
\subsection{Providing first and second CO conditions}
\label{subsec:one_two_conds}
The first CO condition of Subsection~\ref{subsec:co_conds} is achieved
as follows. Formula $H_j$ is built by resolving clauses of $I_0 \wedge
\Abs{T}{j}$. So $H_j$ is implied by $I_0 \wedge \Abs{T}{j}$. Due to
the stuttering feature, this means that $H_j$ is also implied by $I$
alone.

The second CO condition is provided in two steps. Suppose that all
boundary formulas up to $H_{j-1}$ already satisfy the CO conditions
and \Alg starts building formula $H_j$. In the first step, \Alg checks
if $H_{j-1} \wedge T_{j-1,j} \rightarrow P$. If not, then there is an
$H_{j-1}$-state \pnt{s_{j-1}} that reaches a bad state in one
transition. \Alg tries to strengthen $H_{j-1}$ by conjoining the
latter with a CNF formula $G$ falsified by \pnt{s_{j-1}}. To derive
this formula, \Alg calls procedure $\mi{RemBadSt}$ described in
Subsection~\ref{subsec:rem_bs}. It either generates a trace leading
to \pnt{s_{j-1}} (which means that $P$ fails) or returns formula $G$
above.  Formula $G$ is built by relaxing transition relations of some
previous time frames even more and strengthening boundary formulas of
those time frames to make up for such additional relaxation.

%
%
\setlength{\intextsep}{2pt}
\begin{figure}
\small
\begin{tabbing}
aaa\=bb\=cc\= dd\= \kill
$\mi{PC\_LoR}(T,I,P)$ \{\\
\tb{\scriptsize{1}}\> $H_0 := I$;  \\
\tb{\scriptsize{2}}\> $j = 1$; \\
\tb{\scriptsize{3}}\>  while ($\mi{true}$) \{ \\
\tb{\scriptsize{4}}\Tt  $\trx{j-1,j} := T_{j-1,j}$; \\
\tb{\scriptsize{5}}\Tt  $H_j := 1$; \\
\tb{\scriptsize{6}}\Tt  $\mi{Cex} := \mi{RemBadSt}(\Abs{H}{j},\Ttrx{j},P,j)$;\\
\tb{\scriptsize{7}}\Tt   if ($\mi{Cex} \neq \mi{nil}$) return($\mi{No}$); \\
\tb{\scriptsize{8}}\Tt  $\mi{FinRlx}(\Abs{H}{j},\Ttrx{j})$\\
\tb{\scriptsize{9}}\Tt  $\mi{ThirdCOcond}(\Abs{H}{j},\ttrx{j})$; \\
\tb{\scriptsize{10}}\Tt  $\mi{Inv} :=\mi{FinTouch}(\Abs{H}{j},\ttrx{j})$ ;\\
\tb{\scriptsize{11}}\Tt  if ($\mi{Inv}$) return($\mi{Yes}$);  \\
\tb{\scriptsize{12}}\Tt  $j := j+1;$ \}\}\\
\end{tabbing} 
\vspace{-15pt}
\caption{\Alg procedure}
\label{fig:pc_lor}
\end{figure}

The second step starts when $H_{j-1} \wedge T_{j-1,j} \rightarrow P$
holds. In this step, \Alg calls procedure $\mi{FinRlx}$ that relaxes
transition relation $T_{j-1,j}$ of $(j-1)$-th time frame and builds
formula $H_j$ implying $P$ that makes up for relaxing
$T_{j-1,j}$. Originally, $H_j = 1$ and $\trx{j-1,j} = T_{j-1,j}$. If
there is an $H_j$-state \pnt{s_j} that falsifies $P$, \Alg relaxes the
current transition relation \trx{j-1,j} to make \pnt{s_j} reachable
from an $H_{j-1}$-state. This relaxation has the form $\trx{j-1,j} =
\SPb{T}{*rlx}{j-1,j} \wedge R_{j-1,j}$ where \trx{j-1,j} is the
current relaxed formula and \SPb{T}{*rlx}{j-1,j} is a new one that
makes \pnt{s_j} reachable. \Alg looks for a formula $G$ such that
$\prob{\Abs{W}{j-1}}{I_0 \wedge \Abs{H}{j} \wedge \Ttrx{j-1,j} \wedge
  R_{j-1,j}} \equiv G \wedge \prob{\Abs{W}{j-1}}{I_0 \wedge \Abs{H}{j}
  \wedge \Ttrx{j-1,j}}$. Here \Ttrx{j-1,j} is equal to $\Ttrx{j-2,j-1}
\wedge \SPb{T}{*rlx}{j-1,j}$. Formula $G$ is falsified by \pnt{s_j}
and is conjoined with $H_j$ to exclude this state. This goes on until
$H_j$ implies $P$.

%
%
\subsection{Providing third and fourth CO conditions}
\label{subsec:three_four_conds}
After $H_j$ is generated as described above, it satisfies the first
two CO conditions of Subsection~\ref{subsec:co_conds} but the third
condition, in general, does not hold, i.e. \nimpl{H_{j-1} \wedge
  \trx{j-1,j}}{H_j}.  This happens if a clause of an \mbox{$m$-th}
time frame where $m < j-1$ is employed by procedure $\mi{FinRlx}$
above when generating $H_j$ that implies $P$. Let \pnt{s_{j-1}} be an
$H_{j-1}$-state that is one transition away from a state falsifying
$H_j$.  Then \Alg derives a formula falsified by \pnt{s_{j-1}} and
conjoins it with $H_{j-1}$. This formula is derived by the procedure
above used to eliminate $H_{j-1}$-states that are one transition away
from a bad state.  This goes on until \impl{H_{j-1} \wedge
  \trx{j-1,j}}{H_j} holds. Even if the third condition \impl{H_{m-1}
  \wedge \trx{m-1,m}}{H_m} holds for $m < j$, it may get broken after
adding clauses to formula $H_m$. Then the procedure above is used to
eliminate $H_{m-1}$-states that are one transition away from states
falsifying $H_m$.

The fourth CO condition is very easy to maintain. Due to the
stuttering feature, $\prob{\Abs{W}{m-1}}{I_0 \wedge \Abs{T}{m}} \rightarrow
\prob{\Abs{W}{j-1}}{I_0 \wedge \Abs{T}{j}}$, $m < j$.  So every clause of
$H_j$ can be added to every boundary formula $H_m$, $m < j$.

%
%
\subsection{Pseudo-code of \alg}
\label{subsec:alg_descr}
%
%
\setlength{\intextsep}{2pt}
\begin{figure}
\small
\begin{tabbing}
aaa\=bb\=cc\= dd\= \kill
$\mi{RemBadSt}(\Abs{H}{j},\Ttrx{j},P,j)$ \{\\
\tb{\scriptsize{1}}\> $\mi{Cex} := \emptyset$; \\
\tb{\scriptsize{2}}\> $\mi{length} := 0$; \\
\tb{\scriptsize{3}}\> while ($\mi{true}$) \{ \\
\tb{\scriptsize{4}}\Tt if ($\mi{length} = 0$) \{ \\
\tb{\scriptsize{5}}\ttt  $(\pnt{s_{j-1}},\pnt{s_j}) := \mi{FndBadSt}(H_{j-1} \wedge \trx{j-1,j}  \wedge \overline{P})$; \\
\tb{\scriptsize{6}}\ttt  if $((\pnt{s_{j-1}},\pnt{s_j}) = \mi{nil})$ return($\mi{nil}$);\\  
\tb{\scriptsize{7}}\ttt  $\mi{Cex} := (\pnt{s_{j-1}},\pnt{s_j})$;\\ 
\tb{\scriptsize{8}}\ttt   $\mi{length} := 2$;\\
\tb{\scriptsize{9}}\ttt  continue; \}\\
$~~~~ --------- $ \\
\tb{\scriptsize{10}}\Tt  $k := j-\mi{length}+1$;  \\
\tb{\scriptsize{11}}\Tt if ($k = 0$) return($\mi{Cex}$); \\
\tb{\scriptsize{12}}\Tt $\pnt{s_k} := \mi{FirstState}(\mi{Cex})$; \\
\tb{\scriptsize{13}}\Tt $\pnt{s_{k-1}} := \mi{ExtCex}(H_{k-1} \wedge \trx{k-1,k},\pnt{s_k})$; \\
$~~~~ --------- $ \\
\tb{\scriptsize{14}}\Tt if ($\pnt{s_{k-1}} \neq \mi{nil}$) \{\\
\tb{\scriptsize{15}}\ttt   $\mi{Cex} := (\pnt{s_{k-1}},\mi{Cex})$;\\
\tb{\scriptsize{16}}\ttt $\mi{length} := \mi{length}+1$; \\
\tb{\scriptsize{17}}\ttt continue; \}  \\
$~~~~ --------- $ \\
\tb{\scriptsize{18}}\Tt  $R_{k-1,k} := \mi{Relax}(H_{k-1},\trx{k-1,k},s_k)$;\\
\tb{\scriptsize{19}}\Tt  $\mi{PQE}(R_{k-1,k},H_k,\trx{k-1,k},\Abs{H}{k-1},\Ttrx{k-1})$;\\
\tb{\scriptsize{20}}\Tt  $\mi{RemFrstSt}(\mi{Cex},\pnt{s_k})$; \\
\tb{\scriptsize{21}}\Tt  $\mi{length} := \mi{length}-1$;\}\}\\
\end{tabbing} 
\vspace{-15pt}
\caption{\ti{RemBadSt} procedure}
\label{fig:rem_bad_st}
\end{figure}

The pseudo-code of \Alg is given
Figure~\ref{fig:pc_lor}. Boundary formulas are derived in the while
loop (lines 3-12). In every iteration, a boundary formula $H_j$ is
derived and $j$ is incremented by one. Originally, $H_0$ is set to $I$
and $j$ is set to 1.

%
%
\setlength{\intextsep}{2pt}
\begin{figure}
\small
\begin{tabbing}
aaa\=bb\=cc\= dd\= \kill
$\mi{FinRlx}(\Abs{H}{j},\Ttrx{j})$ \{\\
\tb{\scriptsize{1}}\> while ($\mi{true}$) \{\\
\tb{\scriptsize{2}}\Tt   $\pnt{s} := \mi{FindSat}(H_j \wedge \overline{P})$;\\
\tb{\scriptsize{3}}\Tt  if ($\pnt{s} = \mi{nil}$) return;\\
\tb{\scriptsize{4}}\Tt  $R_{k-1,k} := \mi{Relax}(H_{j-1},\trx{j-1,j},s)$;\\
\tb{\scriptsize{5}}\Tt  $\mi{PQE}(R_{j-1,j},H_j,\trx{j-1,j},\Abs{H}{j-1},\Ttrx{j-1})$;\}\}\\
\end{tabbing} 
\vspace{-15pt}
\caption{\ti{FinRlx} procedure}
\label{fig:fin_rlx}
\end{figure}

Every iteration starts by making $H_j$ satisfy the second CO condition
i.e. \impl{H_j}{P} (lines 4-8). First \Alg calls procedure
$\mi{RemBadSt}$ that either returns a counterexample or strengthens
formula $H_{j-1}$ to guarantee $H_{j-1} \wedge T_{j-1,j} \rightarrow
P$.  Procedure $\mi{RemBadSt}$ is described in detail in
Subsection~\ref{subsec:rem_bs}. If $\mi{RemBadSt}$ returns a
counterexample $\mi{Cex}$, \Alg terminates reporting that property $P$
failed. Otherwise, \Alg calls procedure $\mi{FinRlx}$ shown in
Figure~\ref{fig:fin_rlx}. Its work was described in
Subsection~\ref{subsec:one_two_conds}: $\mi{FinRlx}$ relaxes the
transition relation of $(j-1)$-th time frame and adds clauses making
up for this relaxation to $H_j$ until \impl{H_j}{P} holds.

To make sure that \impl{H_{j-1} \wedge \trx{j}}{H_j} holds, \Alg calls
procedure $\mi{ThirdCOcond}$ that works as described in
Subsection~\ref{subsec:three_four_conds}. Finally, to guarantee that
\impl{H_{m-1}}{H_m} holds for all $1 \leq m \leq j$, procedure
$\mi{FinTouch}$ is called.  First, this procedure tries to push every
clause $C$ of $H_j$ to previous boundary formulas. If $C$ is not
implied by $H_m$, $m < j$, it is added to $H_m$ and $\mi{FinTouch}$
tries to push $C$ to $H_{m-1}$. Otherwise, the process of pushing
clause $C$ stops: if $C$ is implied by $H_m$ it is also implied by
every formula $H_k$, $k < m$. The process of pushing clauses of $H_j$
may break third CO condition for some boundary formulas. In this case,
the $\mi{ThirdCOcond}$ procedure is called to repair this
condition. Eventually, $\mi{FinTouch}$ makes all boundary formulas
$H_i$, $i=0,\dots,j$ meet third and fourth CO conditions.

Procedure $\mi{FinTouch}$ also checks if \impl{H_m}{H_{m-1}} holds for
some $m$, $0 \leq m \leq j$. If so, then $H_{m-1} \equiv H_m$ and
$H_{m-1}$ is an inductive invariant (see the proof of
Proposition~\ref{prop:sound}). Checking for presence of an inductive
invariant by testing logical implication is harder than by checking
syntactic equivalence performed in IC3. However, one can use
optimization to mitigate this problem. Here is an example of such
optimization. Formula $H_m$ implies $H_{m-1}$ iff every clause of
$H_{m-1}$ is implied by $H_m$. If a clause of $H_{m-1}$ is implied by
$H_m$, it remains implied no matter what clauses are added to
$H_{m-1}$ and $H_m$. So when checking if \impl{H_m}{H_{m-1}} holds, it
suffices to check for implication every clause $C \in H_{m-1}$ that is
not marked as implied by $H_m$ yet. If $C$ is implied by $H_m$, it is
marked to avoid testing it in the future. Otherwise,
\impl{H_{m}}{H_{m-1}} does not hold and no testing of other unmarked
clauses of $H_{m-1}$ is necessary.

%
%
\subsection{Description of $\mi{RemBadSt}$ procedure}
\label{subsec:rem_bs}
The pseudo-code of the $\mi{RemBadSt}$ procedure is given in
Figure~\ref{fig:rem_bad_st}. The goal of $\mi{RemBadSt}$ is to
strengthen boundary formula $H_{j-1}$ so that $H_{j-1} \wedge
T_{j,j-1} \rightarrow P$ holds. This is the first step of generation
of formula $H_j$ that implies $P$ (see
Subsection~\ref{subsec:one_two_conds}). If $H_{j-1}$ cannot be
strengthened to guarantee the condition above, then a counterexample of
length $j$ is generated by $\mi{RemBadSt}$.

All the work is done in a while loop (lines 3-21) where
$\mi{RemBadSt}$ tries to construct a counterexample. This
counterexample is built in reverse from a bad state reachable from an
$H_{j-1}$-state. In every iteration of the loop, $\mi{RemBadSt}$
either extends the current trace by one more state or shows that the
last $H_m$-state of the trace cannot be reached from an
$H_{m-1}$-state. The latter triggers tightening up formula $H_m$ after
additional relaxation of the current transition \trx{m-1,m}.  The
length of the current trace is specified by variable $\mi{length}$.
The body of the while loop can be partitioned into parts separated by
the dotted lines in Figure~\ref{fig:rem_bad_st}.  If $\mi{length} =
0$, the current trace is empty and $\mi{RemBadSt}$ tries to initialize
it (lines 5-9). Namely, it looks for an $H_{j-1}$-state \pnt{s_{j-1}}
that is one transition away from a bad state \pnt{s_j}. If such states
\pnt{s_{j-1}} and \pnt{s_j} are found, counterexample $\mi{Cex}$ is
initialized with (\pnt{s_{j-1}},\pnt{s_j}).  Otherwise,
$\mi{RemBadSt}$ returns $\mi{nil}$ reporting that $H_{j-1} \wedge
T_{j,j-1} \rightarrow P$ holds.

If $\mi{length} > 0$, $\mi{RemBadSt}$ tries to extend the current
trace (lines 10-13). Let \pnt{s_k} be the state added to $\mi{Cex}$
the last.  If $k = 0$ i.e. if \pnt{s_k} is an $I$-state, the current
$\mi{Cex}$ is a counterexample and $\mi{RemBadSt}$ terminates
returning $\mi{Cex}$.  Otherwise, $\mi{RemBadSt}$ tries to find an
$H_{k-1}$-state \pnt{s_{k-1}} that is one transition away from
\pnt{s_k}. If $\mi{RemBadSt}$ succeeds, $\mi{Cex}$ is extended by
\pnt{s_{k-1}} (lines 15-16). 

If $\mi{RemBadSt}$ fails to find \pnt{s_{k-1}}, $\mi{Cex}$ cannot be
extended to a counterexample. Then $\mi{RemBadSt}$ does the following
(lines 18-21). The current transition relation \trx{k-1,k} is relaxed
even more as described in
Subsection~\ref{subsec:one_two_conds}. Namely, \trx{k-1,k} is
represented as $\SPb{T}{*rlx}{k-1,k} \wedge R_{k-1,k}$ where
\SPb{T}{*rlx}{k-1,k} is a new transition relation for $(k\!-\!\!1)$-th
time frame that makes \pnt{s_k} reachable. $\mi{RemBadSt}$ calls a
PQE-solver to build a formula $G$ such that $\prob{\Abs{W}{k-1}}{I_0
  \wedge \Abs{H}{k} \wedge \Ttrx{k-1,k} \wedge R_{k-1,k}} \equiv G
\wedge \prob{\Abs{W}{k-1}}{I_0 \wedge \Abs{H}{k} \wedge
  \Ttrx{k-1,k}}$. Here \Ttrx{k-1,k} is equal to $\Ttrx{k-2,k-1} \wedge
\SPb{T}{*rlx}{k-1,k}$. Formula $G$ is falsified by \pnt{s_k} and so is
conjoined with $H_k$ to exclude this state. Then $\mi{RemBadSt}$
removes \pnt{s_k} from $\mi{Cex}$ and starts a new iteration.

\subsection{Correctness of \alg}
\label{subsec:correctness}
This subsection lists propositions proving correctness of
\alg.
%
%
\setcounter{proposition}{4}
\begin{proposition}
\label{prop:lmtd_corr}
Let $H_j$, $j=1,\dots,m$ be formulas derived by \Alg for $m$ time
frames where \impl{H_j}{P}.  Then property $P$ holds for system \Ks
for at least $m$ transitions.
\end{proposition}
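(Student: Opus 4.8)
The plan is to show that, once \Alg has finished processing $m$ time frames, each formula $H_k$ over-approximates the set of states reachable in \Ks in $k$ transitions, and then to combine this with the hypothesis \impl{H_k}{P}. We may assume $m \ge 1$ (for $m=0$ the statement is vacuous). First I would record that the formulas $H_0 = I,\,H_1,\dots,H_m$ that are handed to us satisfy all four CO conditions of Subsection~\ref{subsec:co_conds}: the second condition \impl{H_k}{P} is part of the hypothesis and is precisely what $\mi{FinRlx}$ enforces, the third and fourth conditions are restored by $\mi{ThirdCOcond}$ and $\mi{FinTouch}$ (this is the role of those subroutines; see Subsections~\ref{subsec:one_two_conds} and~\ref{subsec:three_four_conds}), and the first follows from the fact that each $H_k$ is obtained by resolving clauses of $I_0 \wedge \Abs{T}{k}$ together with the stuttering feature. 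I would also note that every relaxed transition relation produced by \Alg satisfies \impl{T_{k,k+1}}{\trx{k,k+1}}, since $\trx{k,k+1}$ starts out equal to $T_{k,k+1}$ (line 4 of Figure~\ref{fig:pc_lor}) and is only ever weakened.

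The core step is an induction on $k$, for $0 \le k \le m$, proving the claim: every state \pnt{s} reachable in \Ks in exactly $k$ transitions satisfies $H_k(\pnt{s}) = 1$. For $k=0$ the states reachable in zero transitions are exactly the $I$-states, and $H_0 = I$, so the claim holds. For the inductive step, let \pnt{s} be reachable in exactly $k+1 \le m$ transitions; then there is a state \pnt{s'} reachable in $k$ transitions together with a valid transition from \pnt{s'} to \pnt{s}, i.e. $T_{k,k+1}(\pnt{s'},\pnt{x},\pnt{y},\pnt{s}) = 1$ for some \pnt{x}, \pnt{y}. Since \impl{T_{k,k+1}}{\trx{k,k+1}}, also $\trx{k,k+1}(\pnt{s'},\pnt{x},\pnt{y},\pnt{s}) = 1$. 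By the induction hypothesis $H_k(\pnt{s'}) = 1$, and by the third CO condition \impl{H_k \wedge \trx{k,k+1}}{H_{k+1}}, whence $H_{k+1}(\pnt{s}) = 1$. (Alternatively one could derive the claim from Proposition~\ref{prop:ind_bnd_form} and the definition of a boundary formula, but working from the CO conditions is more self-contained.)

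Finally I would conclude by contradiction. Suppose property $P$ does not hold for \Ks for $m$ transitions; then there is a counterexample of some length $j \le m$, and in particular a bad state \pnt{s}, with $P(\pnt{s}) = 0$, reachable in \Ks in exactly $j$ transitions. By the claim, $H_j(\pnt{s}) = 1$. If $j \ge 1$ then \impl{H_j}{P} gives $P(\pnt{s}) = 1$; if $j = 0$ then, using $m \ge 1$, the chain \impl{H_0}{H_1} (fourth CO condition) and \impl{H_1}{P} yields \impl{H_0}{P}, again forcing $P(\pnt{s}) = 1$. Either way this contradicts $P(\pnt{s}) = 0$, so no such counterexample exists and $P$ holds for \Ks for at least $m$ transitions.

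The induction itself is routine; the point needing the most care is the first step — justifying that the formulas $H_0,\dots,H_m$ genuinely enjoy the third and fourth CO conditions at the moment they are handed to us. These conditions can be transiently violated while \Alg runs and are guaranteed only after the repair subroutines ($\mi{ThirdCOcond}$, $\mi{FinTouch}$) have finished, so the cleanest write-up would either cite a separate invariant-maintenance lemma for \Alg or fold the needed facts into the statement of Proposition~\ref{prop:lmtd_corr}. I would also explicitly dispose of the $j=0$ corner case and the implicit assumption $m \ge 1$.
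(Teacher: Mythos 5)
Your proof is correct, but it takes a genuinely different route from the paper's. The paper's own argument is two lines long and uses no induction: since every clause of $H_j$ is obtained by resolving clauses of $I_0 \wedge \Abs{T}{j}$, the implication $I_0 \wedge \Abs{T}{j} \rightarrow H_j$ holds (this is the fact recorded in Subsection~\ref{subsec:one_two_conds} in connection with the first CO condition), and combining it with the hypothesis \impl{H_j}{P} immediately gives $I_0 \wedge \Abs{T}{j} \rightarrow P$ for each $j \le m$, i.e.\ no bad state is reachable in $j$ transitions. You instead prove by induction over time frames that every state reachable in $k$ transitions satisfies $H_k$, using $H_0 = I$, \impl{T_{k,k+1}}{\trx{k,k+1}}, and the third CO condition \impl{H_k \wedge \trx{k,k+1}}{H_{k+1}}. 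That argument is sound, and it has the virtue of resting only on the CO conditions rather than on how the clauses of $H_j$ are generated, so it would transfer unchanged to the \Mix variant; but, as you yourself flag, it requires the third (and, for your $j=0$ corner case, fourth) CO conditions to actually hold for the formulas handed over, which is guaranteed only after the repair procedures $\mi{ThirdCOcond}$ and $\mi{FinTouch}$ have run. The paper's argument sidesteps that dependence entirely: $I_0 \wedge \Abs{T}{j} \rightarrow H_j$ is an invariant of clause generation that holds at every moment of the run, so no invariant-maintenance lemma, no induction, and no case split on $j$ is needed.
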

%
%
\begin{proposition}
\label{prop:sound}
\Alg is sound.
\end{proposition}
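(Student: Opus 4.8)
The plan is to establish both halves of soundness: if \Alg returns $\mi{No}$ then $P$ fails in \ks, and if it returns $\mi{Yes}$ then $P$ holds in \ks. I rely throughout on three invariants of the main loop of Figure~\ref{fig:pc_lor}, each guaranteed by the construction of the subroutines (and, for the boundary property, by Proposition~\ref{prop:corr_bfs}): whenever control reaches line~11, (a)~the four CO conditions hold for every $H_0,\dots,H_j$; (b)~each $H_k$ is a boundary formula for the pair $(\ks,\ksr)$, where \ksr is the relaxation carrying the \emph{current} relaxed transition relations $\trx{i,i+1}$; and (c)~each $\trx{i,i+1}$ is still implied by $T_{i,i+1}$, since $\mi{Relax}$ only weakens transition relations.

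\emph{The $\mi{Yes}$ branch.} \Alg returns $\mi{Yes}$ only when $\mi{FinTouch}$ finds some $m$, $1\le m\le j$, with \impl{H_m}{H_{m-1}}, which together with the fourth CO condition gives $H_{m-1}\equiv H_m$. I would then show $H_{m-1}$ is an inductive invariant of \ks implying $P$. The first CO condition for $H_{m-1}$ gives \impl{I}{H_{m-1}}, the second gives \impl{H_{m-1}}{P}, and the third CO condition for $H_m$ gives \impl{H_{m-1}\wedge\trx{m-1,m}}{H_m}; since $T_{m-1,m}$ implies $\trx{m-1,m}$ and $H_m\equiv H_{m-1}$, this yields \impl{H_{m-1}\wedge T_{m-1,m}}{H_{m-1}}, i.e.\ the $H_{m-1}$-states are closed under the transition relation of \ks. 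A routine induction on the number of transitions then shows every reachable state of \ks satisfies $H_{m-1}$ and hence $P$, so $P$ holds. The only subtlety is to confirm that the CO conditions really hold at the instant $\mi{FinTouch}$ makes its implication test; this holds because $\mi{RemBadSt}$ and $\mi{FinRlx}$ restore the first two conditions and $\mi{ThirdCOcond}$ and $\mi{FinTouch}$ restore the last two before the test is reached.

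\emph{The $\mi{No}$ branch.} \Alg returns $\mi{No}$ only when $\mi{RemBadSt}$ returns a non-$\mi{nil}$ $\mi{Cex}$. From Figure~\ref{fig:rem_bad_st}, $\mi{Cex}$ is a sequence $\pnt{s_0},\dots,\pnt{s_j}$ with: $\pnt{s_0}$ an $I$-state (the return in line~11 fires when the first state of $\mi{Cex}$ has index $0$, and that state was prepended in line~13 as an $H_0$-state, $H_0=I$); $\pnt{s_j}$ a bad state (line~5 requires $\overline P$); every consecutive pair a transition of the current relaxed system; and $\pnt{s_k}\models H_k$ for $k=0,\dots,j-1$. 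Crucially, the last edge $\pnt{s_{j-1}}\to\pnt{s_j}$ is a transition of the \emph{original} system, since it is produced using $\trx{j-1,j}=T_{j-1,j}$ (line~4 of Figure~\ref{fig:pc_lor}) and $\mi{RemBadSt}$ relaxes only frames $k-1<j-1$ in line~18. Now I would use that $H_{j-1}$ is a boundary formula --- concretely, by Proposition~\ref{prop:ind_bnd_form}/Proposition~\ref{prop:corr_bfs}, $\prob{\Abs{W}{j-2}}{\Abs{H}{j-1}\wedge\Ttrx{j-1}}\equiv\prob{\Abs{W}{j-2}}{I_0\wedge\Abs{T}{j-1}}$ --- so the prefix $\pnt{s_0},\dots,\pnt{s_{j-1}}$, which satisfies the left-hand formula, witnesses that $\pnt{s_{j-1}}$ is reachable in \ks in $j-1$ transitions. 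Composing with the genuine $T_{j-1,j}$-transition, $\pnt{s_j}$ is reachable in \ks in $j$ transitions, and being bad it refutes $P$.

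\emph{Main obstacle.} The hard part is the $\mi{No}$ branch: showing that a trace assembled from the \emph{relaxed} transition relations really certifies reachability of a bad state in the \emph{original} system. This rests on two facts I would spell out as lemmas: (i)~the running invariant that each $H_k$ stays a boundary formula for the current relaxation --- so an $H_k$-state reachable in $k$ steps of the relaxed system is reachable in $k$ steps of \ks --- which is exactly Proposition~\ref{prop:corr_bfs}; and (ii)~a structural invariant of $\mi{RemBadSt}$ stating that when $\mi{length}=\ell$ the suffix of $\mi{Cex}$ of length $\ell$ consists of transitions of frames $j-\ell+1,\dots,j-1$ of the current relaxed system, with frame $j-1$ always still equal to $T_{j-1,j}$. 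Given (i) and (ii), the short induction above finishes the proof; the $\mi{Yes}$ branch is comparatively immediate once the CO invariants are in place.
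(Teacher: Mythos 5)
Your proof is correct and follows essentially the same route as the paper's: the $\mi{Yes}$ branch uses the CO conditions to turn $H_m \equiv H_{m-1}$ into an inductive invariant implying $P$, and the $\mi{No}$ branch uses the equivalence of Proposition~\ref{prop:ind_bnd_form}/\ref{prop:corr_bfs} to transfer the relaxed counterexample trace to the original system. The only (harmless) variation is that you apply that equivalence at frame $j-1$ and exploit that $\trx{j-1,j}=T_{j-1,j}$ during $\mi{RemBadSt}$, whereas the paper applies it directly at frame $j$; you also spell out the algorithmic invariants the paper leaves implicit.
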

%
%
\begin{proposition}
\label{prop:complete}
\Alg is complete.
\end{proposition}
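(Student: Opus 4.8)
Since Proposition~\ref{prop:sound} already guarantees that a reported answer is correct (if \Alg returns $\mi{Yes}$ then $P$ holds for \ks, and if it returns $\mi{No}$ then a genuine counterexample has been produced), completeness amounts to showing that \Alg always terminates. The plan is therefore to prove termination, in two stages: first that every subprocedure call made by \Alg terminates, and then that the outer \textbf{while} loop of Figure~\ref{fig:pc_lor} is executed only finitely many times.

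Termination of the subprocedures rests on finiteness. Write $\mi{St}(H)$ for the set of states satisfying a state formula $H$, so $|\mi{St}(H)|\le 2^{|S|}$, and note two monotone quantities. (i) After line~4 of \Alg sets $\trx{j-1,j}:=T_{j-1,j}$, this formula is thereafter only \emph{relaxed} (clauses of some $R_{k-1,k}$ are discarded) and, since $j$ only grows, it is never reset again; as $T_{k-1,k}$ is a finite CNF, only finitely many $\mi{Relax}$ steps are possible for each time frame over the whole run. (ii) Every $\mi{PQE}$ call conjoins to some $H_k$ a clause set falsified by the current state $\pnt{s_k}$, hence strictly shrinks $\mi{St}(H_k)$; since each $H_k$ is only ever strengthened, at most $2^{|S|}$ such steps occur per time frame. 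Now $\mi{RemBadSt}$ keeps a trace of length at most $j$, and each pass of its loop either lengthens that trace (lines~14--17) or performs one $\mi{PQE}$ and shortens it (lines~18--21); bounding the number of $\mi{PQE}$ steps bounds the number of iterations, the remaining steps being ordinary SAT queries ($\mi{FndBadSt}$, $\mi{ExtCex}$). The loops of $\mi{FinRlx}$, $\mi{ThirdCOcond}$ and $\mi{FinTouch}$ terminate the same way: each iteration performs a relaxation, a boundary-formula strengthening, or a clause push between boundary formulas, all of which are bounded by (i), (ii), and the clause-marking bookkeeping of Subsection~\ref{subsec:alg_descr}.

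For the outer loop, suppose \Alg has completed iterations $1,\dots,j$ without returning. Then, by the way the subprocedures maintain the CO conditions (Subsections~\ref{subsec:one_two_conds}--\ref{subsec:three_four_conds}), the formulas $H_0=I,H_1,\dots,H_j$ all satisfy conditions~1--4, so $\mi{St}(H_0)\subseteq\mi{St}(H_1)\subseteq\cdots\subseteq\mi{St}(H_j)\subseteq\mi{St}(P)$. If some inclusion $\mi{St}(H_{m-1})\subseteq\mi{St}(H_m)$ were an equality, then $H_{m-1}\equiv H_m$ (using \impl{H_{m-1}}{H_m} and \impl{H_m}{H_{m-1}}), and since $\mi{FinTouch}$ tests \impl{H_m}{H_{m-1}} for every $m\le j$, it would have detected the inductive invariant and \Alg would have returned $\mi{Yes}$, a contradiction. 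Hence the chain is strictly increasing, which forces $j\le|\mi{St}(P)|-|\mi{St}(H_0)|\le 2^{|S|}$. So the outer loop runs at most $2^{|S|}+1$ times, \Alg halts, and by Proposition~\ref{prop:sound} its answer is correct --- which is exactly completeness.

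I expect the main obstacle to be the termination of $\mi{RemBadSt}$: its loop alternately grows and shrinks the partial counterexample $\mi{Cex}$, so no obvious quantity decreases, and one must isolate the permanent strengthenings of the $H_k$ as the true progress measure and rule out an infinite oscillation between the ``extend'' branch and the ``relax-and-PQE'' branch. A secondary subtlety is the mild circularity between the two stages --- bounding $j$ presumes that the inner calls of earlier iterations terminated --- which is cleanly handled by an induction on $j$, since the inner calls of iteration $j$ touch only the indices $0,\dots,j$ and $j$ is already bounded once iterations $1,\dots,j-1$ are done.
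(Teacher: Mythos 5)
Your argument is correct (at the level of rigor the paper itself works at), but it is organized differently from the paper's proof. The paper splits on whether $P$ holds: if $P$ fails, it argues directly that at the first time frame $j$ where a bad state is reachable, the boundary formulas --- being implied by $I_0 \wedge \Abs{T}{j}$ --- can never be strengthened to exclude the offending states, so $\mi{RemBadSt}$ must terminate with a counterexample; if $P$ holds, it uses exactly your saturation argument: the monotone chain \impl{H_{j-1}}{H_j} over a state space of size $2^{|S|}$ must produce two logically equivalent consecutive formulas, and since \impl{H_{j-1} \wedge \trx{j-1,j}}{H_j} holds, \Alg stops with ``Yes''. Your route replaces the first case by the clean observation that, given Proposition~\ref{prop:sound}, completeness reduces to termination, and then spends its effort on what the paper leaves implicit: termination of $\mi{RemBadSt}$, $\mi{FinRlx}$, $\mi{ThirdCOcond}$ and $\mi{FinTouch}$, via the two progress measures (finitely many strict relaxations per time frame, and every PQE call strictly shrinking the set of states satisfying some $H_k$). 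This is a genuine addition --- the paper's ``$P$ fails'' case simply asserts that $\mi{RemBadSt}$ terminates, and its ``$P$ holds'' case presupposes that every outer iteration completes --- whereas the paper's direct treatment of the failing case buys an explanation of \emph{why} a counterexample is produced (reachable states always satisfy the $H_k$, so the backward search cannot be blocked), which your proof obtains only indirectly through soundness. The one step you should make explicit rather than flag as an obstacle is the oscillation issue in $\mi{RemBadSt}$: between the moment a state $\pnt{s_k}$ is appended to $\mi{Cex}$ and the moment the PQE at line 19 targets it, only formulas $H_m$ with $m<k$ are strengthened (the PQE always acts at the index of the current first state of $\mi{Cex}$), so $\pnt{s_k}$ still satisfies $H_k$ at that moment and the strengthening is strict; this is precisely what bounds the number of ``relax-and-PQE'' branches and, together with the bound $j+1$ on the trace length between such branches, rules out an infinite run of the loop.
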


\section{Two Important Modifications Of \alg}
\label{sec:two_mods}
In this section, we consider two modifications of the \Alg algorithm
described in Section~\ref{sec:alg_descr}. The first modification is to
incorporate a ``manual'' relaxation that exploits the semantics of the
system.  The second modification is to combine LoR with the machinery
of inductive clauses of IC3. Sequential equivalence checking is a
promising application of the second modification.

\subsection{Relaxation by an educated guess}
\label{subsec:str_awe_rlx}
In this subsection, we describe a modification of \Alg that starts
building a boundary formula $H_j$ by a relaxation that is just a guess
tailored to a particular class of systems/properties.  An example of
such a relaxation is given in Section~\ref{sec:exmp}.  The pseudo-code
of modified \Alg is given in Figure~\ref{fig:str_ind_rlx}. The only
difference between the original version shown in Fig.~\ref{fig:pc_lor}
and the modified one is in line 5 where function $\mi{EducatGuessRlx}$
is called instead of setting $H_j$ to 1. This function does the
following.  First it represents the original transition relation
$T_{j-1,j}$ as $\trx{j-1,j} \wedge R_{j-1,j}$. Here \trx{j-1,j} is the
relaxed transition relation replacing $T_{j-1,j}$.  Then
$\mi{EducatGuessRlx}$ calls a PQE solver to build a formula $H_j$ such
that $\prob{\Abs{W}{j-1}}{I_0 \wedge \Abs{H}{j-1} \wedge \Ttrx{j-1,j}
  \wedge R_{j-1,j}} \equiv H_j \wedge \prob{\Abs{W}{j-1}}{I_0 \wedge
  \Abs{H}{j-1} \wedge \Ttrx{j-1,j}}$. Here \Ttrx{j-1,j} is equal to
$\Ttrx{j-2,j-1} \wedge \trx{j-1}$.

%
%
\setlength{\intextsep}{2pt}
\begin{figure}
\small
\begin{tabbing}
aaa\=bb\=cc\= dd\= \kill
$\mi{PC\_LoR}(T,I,P)$ \{\\

$~~~~..... $ \\
\tb{\scriptsize{4}}\Tt  $\trx{j-1,j} := T_{j-1,j}$; \\
\tb{\scriptsize{5*}} \Tt  $H_j := \mi{EducatGuessRlx}(\Abs{H}{j-1},\Ttrx{j})$; \\  
\tb{\scriptsize{6}}\Tt  $\mi{Cex} := \mi{RemBadSt}(\Abs{H}{j},\Ttrx{j},P,j)$;\\
$~~~~..... $ \\

\end{tabbing} 
\vspace{-15pt}
\caption{\ti{PC\_LoR} plus relaxation by an educated guess}
\label{fig:str_ind_rlx}
\end{figure}

\subsection{Combining LoR with machinery of inductive clauses}
\label{subsec:lor_ic3}
In this subsection, we describe an algorithm called \Mix (IC stands
for Inductive Clauses) that combines LoR and the machinery of
inductive clauses introduced by IC3~\cite{ic3}.  Given a transition
relation $T$, clause $C$ is called \tb{inductive} with respect to
formula $F$ if $F(S) \wedge C(S) \wedge T(S,X,Y,S') \rightarrow C(S')$
holds.  Our interest in \Mix is twofold. First, when computing a new
boundary formula, \Alg often has to go far back to tighten boundary
formulas computed earlier. This tightening is done to make up for
additional relaxation of transition relations of previous time
frames. The great performance of IC3 suggests that tightening of
boundary formulas of previous time frames can be efficiently done by
adding inductive clauses. Second, IC3 builds an inductive invariant by
tightening property $P$ with inductive clauses. This may result in
poor performance if an inductive invariant is ``far away'' from $P$.
Sequential equivalence checking is an example of a PC problem where
IC3 may perform poorly (see Section~\ref{sec:exmp}). \Mix is meant to
address this issue.

%
%
\setlength{\intextsep}{2pt}
\begin{figure}
\small
\begin{tabbing}
aaa\=bb\=cc\= dd\= \kill
$\mi{LoR\_IC}(T,I,P)$ \{\\
\tb{\scriptsize{1}}\> $H_0 := I$;  \\
\tb{\scriptsize{2}}\> $j = 1$; \\
\tb{\scriptsize{3}}\>  while ($\mi{true}$) \{ \\
\tb{\scriptsize{4}}\Tt  $\trx{j-1,j} := T_{j-1,j}$; \\
\tb{\scriptsize{5}} \Tt $H_j := 1$; \\
\tb{\scriptsize{6*}}\Tt  $\mi{Cex} := \SUb{RemBadSt}{IC}(\Abs{H}{j},\Ttrx{j},P,j)$;\\

\tb{\scriptsize{7}}\Tt   if ($\mi{Cex} \neq \mi{nil}$) return($\mi{No}$); \\
\tb{\scriptsize{8}}\Tt  $\mi{FinRlx}(\Abs{H}{j},\Ttrx{j})$\\
\tb{\scriptsize{9*}}\Tt  $\SUb{ThirdCOcond}{IC}(\Abs{H}{j},\ttrx{j})$; \\
\tb{\scriptsize{10*}}\Tt  $\mi{Inv} :=\SUb{FinTouch}{IC}(\Abs{H}{j},\ttrx{j})$ ;\\
\tb{\scriptsize{11}}\Tt  if ($\mi{Inv}$) return($\mi{Yes}$);  \\
\tb{\scriptsize{12}}\Tt  $j := j+1;$ \}\}\\
\end{tabbing} 
\vspace{-15pt}
\caption{\ti{LoR\_IC} procedure}
\label{fig:lor_ic}
\end{figure}

The pseudo-code of \Mix is shown in Figure~\ref{fig:lor_ic}. The
lines where \Mix is different from \Alg are marked with an asterisk.
Consider how \Mix builds formula $H_j$ after formulas
$H_0,\dots,H_{j-1}$ satisfying the four CO conditions have been
generated. Similarly to \alg, \Mix makes two steps to guarantee that
the second CO condition i.e. \impl{H_j}{P} holds. In the first step,
it makes sure that $H_{j-1} \wedge \trx{j-1,j} \rightarrow P$
holds. However, in contrast to \alg, this is done by calling procedure
$\SUb{RemBadSt}{IC}$ generating inductive clauses. If there is an
$H_{j-1}$-state \pnt{s_{j-1}} that is one transition away from a bad
state, a clause $C$ inductive with respect to $H_{j-1}$ is
generated. This clause is falsified by \pnt{s_{j-1}} and so is added
to $H_{j-1}$ to exclude this state. The second step is performed like
in \Alg by calling procedure $\mi{FinRlx}$. The latter relaxes
transition relation $T_{j-1,j}$ and builds $H_j$ as a set set of
clauses making up for this relaxation. This is where \Mix is
\ti{different} from IC3.  In IC3, formula $H_{j}$ is built by
conjoining the inductive clauses generated to exclude $H_{j-1}$-states
with $P$. Note that using these clauses when forming $H_j$ is not
actually mandatory. The ``why-not'' argument given in~\cite{ic3} is
that these clauses are implied by $H_{j-1} \wedge T_{j-1,j}$.

To satisfy the third CO condition, \Mix calls function
\SUb{ThirdCOcond}{IC}.  In contrast to $\mi{ThirdCOcond}$ of \alg,
\SUb{ThirdCOcond}{IC} does the job by generation of inductive
clauses. Suppose that one needs to eliminate an $H_{j-1}$-state
\pnt{s_{j-1}} from which a state \pnt{s_j} falsifying $H_{j}$ is
reachable in one transition.  Then $\SUb{ThirdCOcond}{IC}$ generates a
clause inductive with respect to $H_{j-1}$. This clause is falsified
by \pnt{s_{j-1}} and so is added to $H_{j-1}$ to exclude
\pnt{s_{j-1}}. To guarantee that the third and fourth CO conditions
hold for all formulas $H_k$, $k=0,\dots,j$ built so far, \Mix calls
function \SUb{FinTouch}{IC}. In contrast to $\mi{FinTouch}$,
\SUb{FinTouch}{IC} does the job via inductive clauses.

Note that instead of initializing $H_j$ to 1 (line 5 of
Fig.~\ref{fig:lor_ic}), one can call procedure
$\mi{EducatGuessRlx}$ to apply a transition relation relaxation
tailored to a particular system/property (see
Subsection~\ref{subsec:str_awe_rlx}).  We believe that the version of
\Mix where $\mi{EducatGuessRlx}$ employs relaxation described in
Section~\ref{sec:exmp} is a promising algorithm for sequential
equivalence checking. The idea here is as follows. First,
$\mi{EducatGuessRlx}$ generates formula $H_j$ that is close to an
inductive invariant. Then some fine-tuning of $H_j$ is done by adding
inductive clauses generated when computing boundary formulas $H_m$, $m
> j$.

\section{Conclusions}
\label{sec:conclusions}
We introduced a new framework for Property Checking (PC) based on a
method called Logic Relaxation (LoR). The appeal of PC by LoR is that
an inductive invariant is the result of comparison of the original and
relaxed transition systems. So the complexity of PC can be
significantly reduced if the relaxed system is close to the original
one.  A key part of the LoR method is a technique called partial
quantifier elimination.  So it is extremely important to keep
improving the performance of algorithms implementing this technique.

\vspace{15pt}
\appendix
\setcounter{proposition}{0}
%
%
\begin{lemma}
\label{lemma:ec}
Let $\mi{EQ}^X_0$ denote $\mi{EQ}(X^N_0,X^K_0)$. Let $T_{0,1} =
T^N_{0,1} \wedge T^K_{0,1} \wedge \mi{EQ}^X_0$ be the transition
relation specifying miter $M$ of two identical circuits $N$ and $K$ in
terms of initial time frame variables. Let the relaxed
transition \trx{0,1} for miter $M$ be equal to $T^N_{0,1} \wedge
T^K_{0,1}$ i.e. $T_{0,1} = \mi{EQ}^X_0 \wedge \trx{0,1}$. Let
$\mi{EQ}^S_1$ denote $\mi{EQ}(S^N_1,S^K_1)$. Let $s^N_{j,1}$ denote
$j$-th state variable of circuit $N$ of time frame 1. Then
\begin{enumerate}[a)]
\item $I_0 \wedge T_{0,1} \rightarrow \mi{EQ}^S_1$
\item $I_0 \wedge \trx{0,1} \rightarrow (s^N_{j,1} \equiv s^K_{j,1})$ if
\vspace{2pt}
the value of variable $s^N_{j,1}$ remains the same for every state
reachable from an $I^N_0$-state in one transition.
\item  $\prob{W_0}{I_0 \wedge
\mi{EQ}^X_0 \wedge \trx{0,1}} \equiv \mi{EQ}^S_1 \wedge \prob{W_0}{I_0 \wedge \trx{0,1}}$
\end{enumerate}
\end{lemma}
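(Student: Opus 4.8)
The plan is to dispatch the three parts in order, using throughout that $K$ is a variable-disjoint isomorphic copy of $N$: there is a renaming $\rho$ sending each variable of $N$ to its counterpart in $K$ with $T^K = \rho(T^N)$, $I^K = \rho(I^N)$, $\rho(s^N_{j,1}) = s^K_{j,1}$, and such that $\mi{EQ}^X_0$ (resp. $\mi{EQ}^S_1$) asserts exactly that the $X^N_0$- and $X^K_0$-values (resp. $S^N_1$- and $S^K_1$-values) correspond under $\rho$. I also use that the CNF $T^N$ is \emph{functional}: for fixed values of $S^N_0, X^N_0$ the values of $Y^N_0, S^N_1$ forced by $T^N$ are unique (standard for the transition relation of a circuit), and likewise for $T^K$ via $\rho$.

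For (a), take any model of $I_0 \wedge T_{0,1}$. From $I_0$ the $S^N_0$- and $S^K_0$-values correspond under $\rho$, and from the conjunct $\mi{EQ}^X_0$ of $T_{0,1}$ so do the $X^N_0$- and $X^K_0$-values; functionality of $T^N$ and $T^K$ then forces the $Y$- and $S_1$-values to correspond as well, so in particular $\mi{EQ}^S_1$ holds. For (b), let $c$ be the common value of $s^N_{j,1}$ over all states of $N$ reachable from an $I^N_0$-state in one transition; applying $\rho$, the same holds for $K$ with the same $c$. In any model of $I_0 \wedge \trx{0,1}$, the conjunct $T^N_{0,1}$ exhibits the $S^N_1$-part as a one-step successor of the $I^N_0$-state given by the $S^N_0$-part, so $s^N_{j,1} = c$; symmetrically $T^K_{0,1}$ together with $I^K_0$ true on the $S^K_0$-part gives $s^K_{j,1} = c$, whence $\impl{I_0 \wedge \trx{0,1}}{(s^N_{j,1} \equiv s^K_{j,1})}$.

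For (c), note $I_0 \wedge \mi{EQ}^X_0 \wedge \trx{0,1} = I_0 \wedge T_{0,1}$, so the left-hand side describes, as a function of $S_1$, the set of states reachable in $\ks$ in one transition, while $\prob{W_0}{I_0 \wedge \trx{0,1}}$ describes that of $\ksr$; I show the two sides coincide as functions of $S_1$. The implication from left to right is immediate: a state reachable in $\ks$ in one step is reachable in $\ksr$ (drop the $\mi{EQ}^X_0$ conjunct of the witness) and satisfies $\mi{EQ}^S_1$ by part (a). For the converse, suppose $\pnt{s_1}$ satisfies $\mi{EQ}^S_1$ and is reachable in $\ksr$ in one step, witnessed by $(\pnt{s^N_0},\pnt{x^N_0},\pnt{y^N_0})$ for $N$ and $(\pnt{s^K_0},\pnt{x^K_0},\pnt{y^K_0})$ for $K$ with the $S_0$-parts corresponding under $\rho$. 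Discard $K$'s witness and replay $N$'s: $\rho$ applied to $(\pnt{s^N_0},\pnt{x^N_0},\pnt{y^N_0})$ is a model of $T^K_{0,1}$ with next state $\rho(\pnt{s^N_1})$, and $\rho(\pnt{s^N_1})$ equals $\pnt{s^K_1}$ precisely because $\mi{EQ}^S_1$ holds at $\pnt{s_1}$. The combined assignment is a model of $T^N_{0,1} \wedge T^K_{0,1}$ with corresponding $X_0$-values, hence of $\mi{EQ}^X_0$, and of $I_0$, and it ends in $\pnt{s_1}$; so $\pnt{s_1}$ is reachable in $\ks$ in one step.

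The only non-routine step is the converse in (c). The obstacle is that the two $\ksr$-witnesses may drive $N$ and $K$ with different inputs, so they cannot simply be fused into a $\ks$-witness; the remedy is to retain only $N$'s witness and reconstruct a same-input run of $K$ ending in the prescribed state, using that $K$ is an identical copy of $N$ --- and this is exactly where the equality $\pnt{s^N_1} = \pnt{s^K_1}$ supplied by $\mi{EQ}^S_1$ is consumed. Parts (a) and (b) require nothing beyond functionality of the circuit encodings and the renaming $\rho$.
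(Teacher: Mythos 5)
Your proposal is correct and follows essentially the same route as the paper: parts a) and b) rest on the circuits being identical copies with deterministic (functional) transition relations, and the hard direction of c) is established by exactly the paper's witness surgery --- discarding $K$'s input/internal assignment, replaying $N$'s trace in $K$, and using $\mi{EQ}^S_1$ to conclude the $S_1$-part is unchanged. The only cosmetic difference is that you argue the equivalence in c) directly as two implications, whereas the paper phrases it as a proof by contradiction over the two cases.
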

\begin{proof}[\kern-10pt Proof]

\ti{Item a)}. Since $N$ and $K$ are identical and $I_0$ implies $\mi{EQ}^S_1$, 
miter $M$ reaches only states (\pnt{s_N},\pnt{s_M}) where
\pnt{s_N} = \pnt{s_M}. This means that $I_0 \wedge T_{0,1} \rightarrow \mi{EQ}^S_1$.

\ti{Item b)}. This item explains under what conditions some clauses of $\mi{EQ}^X_0$
are redundant without adding any clauses of $\mi{EQ}^S_1$.  Suppose
that the assumption of item b) holds.  Then the fact that \trx{0,1}
does not impose restrictions on $X^N$ and $X^K$ does not matter as far
as variables $s^N_{j,1}$ and $s^K_{j,1}$ are concerned. Indeed, the
value of those variables remains the same for all assignments to $X_N$
and $X_K$.  This means that $I_0 \wedge \trx{0,1} \rightarrow
(s^N_{j,1} \equiv s^K_{j,1})$.  So when taking formula $\mi{EQ}^X_0$
out of the scope of quantifiers, adding all the clauses of
$\mi{EQ}^S_1$ is not necessary. (Recall that given sets of Boolean
variables $A=(a_1,\dots,a_k)$ and $B=(b_1,\dots,b_k)$, $\mi{EQ}(A,B) =
(a_1 \equiv b_1) \wedge \dots \wedge (a_k \equiv b_k)$). Namely, one
does not need to add the clauses of $\mi{EQ}^S_1$ specifying
$s^N_{j,1} \equiv s^K_{j,1}$.

\ti{Item c)}. Assume the contrary. Taking into account that $T_{0,1} = \mi{EQ}^X_0 \wedge \trx{0,1}$,
this means that $\prob{W_0}{I_0 \wedge
T_{0,1}} \not\equiv \mi{EQ}^S_1 \wedge \prob{W_0}{I_0 \wedge \trx{0,1}}$.
Let $\mi{Left\_part}$ and $\mi{Right\_part}$ specify the left and
right parts of the inequality above respectively. Consider the two
alternatives.

\vspace{3pt}
\noindent $\mi{Left\_part}$ = 1, $\mi{Right\_part}$ = 0. 
Then there is an assignment \pnt{t} to $W_0 \cup S_1$
that satisfies $I_0 \wedge T_{0,1}$ and hence
$I_0 \wedge \trx{0,1}$. Since $\mi{Right\_part}=0$, then
$\mi{EQ}^S_1(\pnt{t}) = 0$, which means that $I_0 \wedge
T_{0,1} \not\rightarrow \mi{EQ}^S_1$. So we have a contradiction.

\vspace{3pt}
\noindent $\mi{Left\_part}$ = 0, $\mi{Right\_part}$ = 1 for an
assignment \pnt{s_1} to $S_1$.  Then there is an assignment \pnt{t} to
$W_0 \cup S_1$ obtained by extending
\pnt{s_1} that satisfies
$\mi{EQ}^S_1 \wedge I_0 \wedge \trx{0,1}$. Since $\mi{Left\_part} =
0$, then \pnt{t} falsifies $I_0 \wedge T_{0,1}$. This means
that \pnt{t} falsifies $\mi{EQ}^X_0$
i.e. $\pnt{x^N_0} \neq \pnt{x^K_0}$ where \pnt{x^N_0} and \pnt{x^K_0}
are assignments to $X^N_0$ and $X^K_0$ from \pnt{t}. Let \pnt{t^*} be
the assignment obtained from \pnt{t} by replacing assignment to
$X^K_0 \cup Y^K_0$ specifying the execution trace for \pnt{x^K_0} with
that specifying the execution trace for input \pnt{x^N_0}. It is not
hard to see that \pnt{t^*} has the same assignment to $S_1$ as \pnt{t}
but satisfies $I_0 \wedge T_{0,1}$. So $\mi{Left\_part} = 1$ for
assignment \pnt{s_1} to $S_1$ and we have a contradiction.
\end{proof}

\vspace{3pt}
%
%
\begin{proposition}
Let $H_j$ be a formula (depending only on variables of $j$-th cut)
such that $\prob{\Abs{W}{j-1}}{I_0 \wedge \Abs{T}{j}} \equiv
H_j \wedge \prob{\Abs{W}{j-1}}{I_0 \wedge \ttrx{j}}$.  Then $H_j$ is a
boundary formula for the pair (\ks, \ksrr{j}).
\end{proposition}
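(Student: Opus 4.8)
The plan is to unpack what the two quantified formulas in the hypothesis say about reachability and then read off the two defining conditions of a boundary formula by a pointwise case split on a state. First I would note that, because $H_j$ depends only on the variables $S_j$ of the $j$-th cut, after quantifying out \Abs{W}{j-1} both sides of the hypothesis equality are formulas whose only free variables are those of $S_j$, so it is legitimate to evaluate the equality at an arbitrary state \pnt{s}, i.e.\ at a complete assignment to $S_j$.

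The crucial observation is that \prob{\Abs{W}{j-1}}{I_0 \wedge \Abs{T}{j}}, read as a Boolean function of $S_j$, is exactly the characteristic function of the set of states reachable in \ks in $j$ transitions: unrolling the definition of a valid trace, \pnt{s} satisfies it iff there are assignments $\pnt{s_0},\dots,\pnt{s_{j-1}}$ and assignments to the $X_i,Y_i$ with $I(\pnt{s_0})=1$, $T(\pnt{s_i},\cdot,\cdot,\pnt{s_{i+1}})=1$ for $0 \le i < j$, and $\pnt{s_j}=\pnt{s}$ (here one uses that the time-frame variable sets are pairwise disjoint, so the joint existential factors through the per-step existentials). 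In the same way, since \ksrr{j} has the same initial states as \ks and its unrolling to the $j$-th cut is governed by $I_0 \wedge \ttrx{j}$ (only the transition entering the $j$-th cut being relaxed), \prob{\Abs{W}{j-1}}{I_0 \wedge \ttrx{j}} is the characteristic function of the set of states reachable in \ksrr{j} in $j$ transitions. Because \impl{T_{j-1,j}}{\trx{j-1,j}} while the earlier transition relations agree, \impl{\Abs{T}{j}}{\ttrx{j}}, so every state reachable in \ks in $j$ transitions is reachable in \ksrr{j} in $j$ transitions, as Definition~\ref{def:relax} demands.

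With these two readings the rest is a two-line case analysis. For condition~(2) of Definition~\ref{def:bnd_form}: if \pnt{s} is reachable in \ks in $j$ transitions then the left-hand side of the hypothesis equals $1$ at \pnt{s}, hence so does the right-hand side, and in particular $H_j(\pnt{s})=1$. For condition~(1): if \pnt{s} is reachable in \ksrr{j} but not in \ks in $j$ transitions, then the factor \prob{\Abs{W}{j-1}}{I_0 \wedge \ttrx{j}} equals $1$ at \pnt{s} while the left-hand side equals $0$; since the right-hand side must then be $0$, we conclude $H_j(\pnt{s})=0$. These two cases are disjoint and exhaust the \ksrr{j}-reachable states (as \ks-reachability implies \ksrr{j}-reachability), so $H_j$ is a boundary formula for $(\ks,\ksrr{j})$.

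I do not expect a genuine obstacle here: the entire argument amounts to evaluating the hypothesis at a state and branching on whether that state is \ks-reachable. The only points that merit care are the variable-scoping bookkeeping --- checking that quantifying \Abs{W}{j-1} truly binds $S_0$ (inside $I_0$) and all the input and internal variables, so that $S_j$ remains the sole free variables on both sides --- and the faithful translation between ``valid trace of length $j$'' and the unrolled conjunction $I_0 \wedge \Abs{T}{j}$. The same template, now with the split $T_{j-1,j} = \trx{j-1,j} \wedge R_{j-1,j}$ (which gives $\ttrx{j} \wedge R_{j-1,j} \equiv \Abs{T}{j}$), will also deliver Proposition~\ref{prop:bnd_form_pqe}.
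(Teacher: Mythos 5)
Your proposal is correct and is essentially the paper's own argument: both rest on reading $\prob{\Abs{W}{j-1}}{I_0 \wedge \Abs{T}{j}}$ and $\prob{\Abs{W}{j-1}}{I_0 \wedge \ttrx{j}}$ as reachability predicates at the $j$-th cut and then evaluating the hypothesized equivalence at a state, with your direct two-case verification of Definition~\ref{def:bnd_form} being just the contrapositive phrasing of the paper's proof by contradiction (its Case~1 and Case~2 match your conditions~(1) and~(2)). No gap; your remarks on variable scoping and the trace/unrolling correspondence are exactly the bookkeeping the paper leaves implicit.
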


\begin{proof}[\kern-10pt Proof]
Assume the contrary i.e. $H_j$ is not a boundary formula.
Definition~\ref{def:bnd_form} suggests that then one of the two cases
below takes place.

\ti{Case 1:} There is a valid trace $t$=\tr{s}{0}{j} of \ksrr{j} such that
\pnt{s_j} is not reachable in \Ks in $j$ transitions and $H_j(\pnt{s_j}) = 1$.
Since $t$ is a valid trace in \Ksr and $H_j(\pnt{s_j})=1$, formula
$H_j \wedge \prob{\Abs{W}{j-1}}{I_0 \wedge \ttrx{j}}$ evaluates to 1
under assignment \pnt{s_j} to $S_j$.
Then \prob{\Abs{W}{j-1}}{I_0 \wedge \Abs{T}{j}} evaluates to 1
under \pnt{s_j} as well, which means that \pnt{s_j} is reachable
in \ks. So we have a contradiction.

\ti{Case 2:} There is a valid trace $t$=\tr{s}{0}{j} of \Ks and yet
$H_j(\pnt{s_j}) = 0$. Then
formula \prob{\Abs{W}{j-1}}{I_0 \wedge \Abs{T}{j}} evaluates to 1
under assignment \pnt{s_j}. On the other hand, the fact that
$H_j(\pnt{s_j})=0$ means that
\prob{\Abs{W}{j-1}}{I_0  \wedge \Abs{T}{j}}
 $\neq$  $H_j \wedge \prob{\Abs{W}{j-1}}{I_0 \wedge \ttrx{j}}$
 under assignment \pnt{s_j}.
So we have a contradiction.
\end{proof}

\vspace{3pt}
%
%
\begin{proposition}
Let $T_{j-1,j} = \trx{j-1,j} \wedge R_{j-1,j}$. Let $H_j$ be a formula
such that $\prob{\Abs{W}{j-1}}{I_0 \wedge \ttrx{j} \wedge
R_{j-1,j}} \equiv H_j \wedge
\prob{\Abs{W}{j-1}}{I_0 \wedge \ttrx{j}}$. Then
$H_j$ is a boundary formula for the pair (\ks, \ksrr{j}).
\end{proposition}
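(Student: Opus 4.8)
The plan is to reduce this proposition to Proposition~\ref{prop:bnd_form}, which has already been established. The key is that the two hypotheses, once the notation is unfolded, are literally the same statement. Recall that \Abs{T}{j} denotes $T_{0,1} \wedge \dots \wedge T_{j-1,j}$, i.e. $\Abs{T}{j-1} \wedge T_{j-1,j}$, while \ttrx{j} denotes $\Abs{T}{j-1} \wedge \trx{j-1,j}$. So the first thing I would do is substitute the assumed decomposition $T_{j-1,j} = \trx{j-1,j} \wedge R_{j-1,j}$ into \Abs{T}{j}, obtaining $\Abs{T}{j} = \Abs{T}{j-1} \wedge \trx{j-1,j} \wedge R_{j-1,j} = \ttrx{j} \wedge R_{j-1,j}$ (using, as the paper notes, that CNF formulas and sets of clauses are identified, so this is an honest syntactic identity, not merely a logical equivalence).

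Given that identity, the quantified equivalence assumed in the present proposition,
\[
\prob{\Abs{W}{j-1}}{I_0 \wedge \ttrx{j} \wedge R_{j-1,j}} \equiv H_j \wedge \prob{\Abs{W}{j-1}}{I_0 \wedge \ttrx{j}},
\]
is exactly the same as
\[
\prob{\Abs{W}{j-1}}{I_0 \wedge \Abs{T}{j}} \equiv H_j \wedge \prob{\Abs{W}{j-1}}{I_0 \wedge \ttrx{j}},
\]
which is precisely the hypothesis of Proposition~\ref{prop:bnd_form}. Invoking that proposition then immediately yields that $H_j$ is a boundary formula for the pair (\ks, \ksrr{j}), finishing the argument. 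I would also note in passing that $H_j$ depends only on the variables of the $j$-th cut, since it is obtained as the result of a PQE over \Abs{W}{j-1} and the remaining formula \ttrx{j} involves only those variables together with $S_j$ — this matches the side condition in Proposition~\ref{prop:bnd_form}.

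There is essentially no hard step here: the whole content is bookkeeping with the $\trx{}$ / $R$ / $T$ notation and checking that \ksrr{j} in both statements refers to the same relaxation (the one that relaxes only the transition relation of the $j$-th time frame, per Definition~\ref{def:tr_relax}). If anything is worth spelling out, it is the justification that splitting $T_{j-1,j}$ as $\trx{j-1,j} \wedge R_{j-1,j}$ is compatible with \ksrr{j} being a legitimate relaxation — i.e. that $\impl{T_{j-1,j}}{\trx{j-1,j}}$ holds, which is automatic from the conjunctive decomposition. So the ``obstacle,'' such as it is, is purely one of making the notational substitution transparent to the reader rather than any real mathematical difficulty.
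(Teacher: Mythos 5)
Your proposal is correct and coincides with the paper's own argument: both reduce the statement to Proposition~\ref{prop:bnd_form} by observing that $\Abs{T}{j} = \ttrx{j} \wedge R_{j-1,j}$, so the hypothesis here is literally the hypothesis of that proposition. Your extra remarks on the cut variables and the legitimacy of the relaxation are harmless elaborations of the same one-line reduction.
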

\begin{proof}[\kern-10pt Proof]
By definition, $\Abs{T}{j} = \ttrx{j} \wedge R_{j-1,j}$. Then the
correctness of the proposition follows from
Proposition~\ref{prop:bnd_form}.
\end{proof}

\vspace{3pt}
%
%
\begin{proposition}
$\prob{\Abs{W}{j-1}}{ I_0 \wedge \Abs{T}{j}} \equiv$
$\exists{\Abs{W}{j-1}}[\Abs{H}{j} \wedge $ $\Ttrx{j}]$.
\end{proposition}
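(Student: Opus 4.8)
The plan is to prove the equivalence semantically, by checking that its two sides --- read as formulas over the state variables $S_j$ of the $j$-th time frame --- are satisfied by exactly the same assignments \pnt{s_j}. Throughout, let \ksr denote the relaxation of \ks whose transition relation in each time frame $k$ is \trx{k,k+1}; its $j$-step unrolling is precisely \Ttrx{j}, and, by hypothesis, each $H_i$ (with $H_0 = I$) is a boundary formula for the pair (\ks,\ksr) in the sense of Definition~\ref{def:bnd_form}.

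First I would unpack both sides. Quantifying $X_k \cup Y_k$ out of the factor $T_{k,k+1}$ of $I_0 \wedge \Abs{T}{j}$ produces a copy of \Prob{X}{Y}{T(\pnt{s_k},X,Y,\pnt{s_{k+1}})}, the condition on step $k$ in the definition of a valid trace; since \Abs{W}{j-1} also binds all of $S_0,\dots,S_{j-1}$ (together with the input and internal variables of frames $0,\dots,j-1$), an assignment \pnt{s_j} satisfies \prob{\Abs{W}{j-1}}{I_0 \wedge \Abs{T}{j}} iff there is a valid trace \tr{s}{0}{j} of \ks with $I(\pnt{s_0}) = 1$, i.e. iff \pnt{s_j} is reachable in \ks in $j$ transitions. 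In the same way, \pnt{s_j} satisfies $\exists{\Abs{W}{j-1}}[\Abs{H}{j} \wedge \Ttrx{j}]$ iff there is a trace \tr{s}{0}{j} that is valid in \ksr, has $I(\pnt{s_0}) = 1$, and satisfies $H_k(\pnt{s_k}) = 1$ for every $k = 0,\dots,j$.

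Next I would prove the two inclusions. For the forward direction, take a valid trace \tr{s}{0}{j} of \ks witnessing reachability of \pnt{s_j}: because \impl{T_{k,k+1}}{\trx{k,k+1}} it is also valid in \ksr, and each prefix \tr{s}{0}{k} is itself a valid trace of \ks from $I$, so \pnt{s_k} is reachable in \ks in $k$ transitions and condition~2 of Definition~\ref{def:bnd_form} gives $H_k(\pnt{s_k}) = 1$; hence this trace witnesses the right-hand side. For the converse, a trace \tr{s}{0}{j} witnessing the right-hand side shows that \pnt{s_j} is reachable in \ksr in $j$ transitions with $H_j(\pnt{s_j}) = 1$; were \pnt{s_j} unreachable in \ks in $j$ transitions, condition~1 of Definition~\ref{def:bnd_form} applied to $H_j$ would force $H_j(\pnt{s_j}) = 0$, a contradiction, so \pnt{s_j} is reachable in \ks in $j$ transitions and satisfies the left-hand side.

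The step I expect to demand the most care is the unpacking in the second paragraph: one must be sure that the outermost quantifier \Abs{W}{j-1} genuinely binds all the intermediate states $S_1,\dots,S_{j-1}$ (so that the quantified formula asserts the existence of a full length-$j$ trace and not merely of a one-step predecessor), and that Definition~\ref{def:bnd_form} is being applied to the $H_i$ against this fully relaxed \ksr --- the same system whose unrolling is \Ttrx{j} --- rather than against a single-time-frame relaxation \ksrr{i}. Alternatively, since in this section the $H_i$ are built by the inductive PQE procedure stated just after the proposition, one can avoid Definition~\ref{def:bnd_form} altogether and argue by induction on $j$ directly from the defining equations $\prob{\Abs{W}{i-1}}{\Abs{H}{i-1} \wedge \Ttrx{i} \wedge R_{i-1,i}} \equiv H_i \wedge \prob{\Abs{W}{i-1}}{\Abs{H}{i-1} \wedge \Ttrx{i}}$, using $T_{i-1,i} = \trx{i-1,i} \wedge R_{i-1,i}$ and the fact that a factor free of the quantified variables may be pulled out of the scope of the quantifiers; but the semantic argument above is shorter.
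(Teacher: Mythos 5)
Your semantic argument is internally coherent, but it proves the equivalence under a hypothesis the paper does not supply: that each $H_i$ is a boundary formula with respect to the \emph{fully} relaxed system whose unrolling is \Ttrx{j}. In the paper, the formulas $H_1,\dots,H_j$ to which this proposition is applied are those produced by the inductive PQE procedure stated right after it, and what is actually established about them (Proposition~\ref{prop:corr_bfs} combined with Proposition~\ref{prop:bnd_form}) is only that each $H_k$ is boundary for the pair (\ks,\,\ksrr{k}), i.e.\ relative to the relaxation in which the first $k-1$ transitions are the \emph{original} $T_{i-1,i}$ and only the last transition is relaxed. This is exactly the point you flagged and then resolved in the wrong direction. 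In your converse inclusion the witnessing trace is valid only under \Ttrx{j}, so it shows that \pnt{s_j} is reachable in the fully relaxed system, not in \ksrr{j}; condition~1 of Definition~\ref{def:bnd_form}, taken for the pair (\ks,\,\ksrr{j}), then says nothing about $H_j(\pnt{s_j})$. And the stronger property is genuinely unavailable: a PQE-built $H_j$ may evaluate to 1 on a state reachable only by fully relaxed traces, provided every such trace violates some intermediate $H_k$, because its defining equation constrains $H_j$ only under the conjunct \Abs{H}{j-1}. So, as written, the right-to-left direction does not go through for the formulas the algorithm actually constructs (which is how the proposition is later used, e.g.\ in the proof of Proposition~\ref{prop:sound}).

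The gap can be closed inside your semantic framework by inducting along the trace instead of applying condition~1 once at the end: if $\pnt{s_{k-1}}$ is already known to be reachable in \ks in $k-1$ transitions and the pair $(\pnt{s_{k-1}},\pnt{s_k})$ satisfies \trx{k-1,k}, then \pnt{s_k} is reachable in \ksrr{k} in $k$ transitions, and $H_k(\pnt{s_k})=1$ together with condition~1 for the pair (\ks,\,\ksrr{k}) forces reachability of \pnt{s_k} in \ks; iterating from $\pnt{s_0}$ uses precisely the intermediate constraints $H_k(\pnt{s_k})=1$ that your unpacking provides. The paper instead proves the statement by induction on $j$ purely algebraically: pull $T_{j,j+1}$ out of the scope of $\exists\Abs{W}{j-1}$, substitute the inductive hypothesis, split $T_{j,j+1}=\trx{j,j+1}\wedge R_{j,j+1}$, and invoke the defining PQE equation of $H_{j+1}$ --- which is exactly the alternative you sketch in your closing sentence but do not carry out. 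Either that route or the trace-induction repair should replace the single appeal to boundary-ness with respect to the fully relaxed system.
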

\begin{proof}[\kern-10pt Proof]
Let us prove the proposition by induction.
Proposition~\ref{prop:bnd_form} entails that the proposition at hand
holds for $j=1$. Let us show that the correctness of the proposition
for $j > 1$, implies that it holds for $j+1$.  Let \Fii
denote \prob{\Abs{W}{j}}{ I_0 \wedge \Abs{T}{j+1}}. Formula \Fii can
be rewritten as \Prob{W_j}{\Abs{W}{j-1}}{I_0 \wedge \Abs{T}{j} \wedge
T_{j,j+1}}.  Taking into account that $T_{j,j+1}$ does not depend on
variables of \Abs{W}{j-1}, formula \Fii can represented as
$\exists{W_j}[T_{j,j+1} \wedge \prob{\Abs{W}{j-1}}{I_0 \wedge \Abs{T}{j}}]$.
Using the inductive hypothesis this formula can be transformed into
$\exists{W_j}[T_{j,j+1} \wedge \prob{\Abs{W}{j-1}}
{\Abs{H}{j} \wedge \Ttrx{j}}]$. Taking into account that $T_{j,j+1}
= \trx{j,j+1} \wedge R_{j,j+1}$, formula \Fii can be represented
as \prob{\Abs{W}{j}}{\Abs{H}{j} \wedge \Ttrx{j} \wedge \trx{j,j+1} \wedge
R_{j,j+1}}. Since $H_{j+1}$ is obtained by taking $R_{j,j+1}$ out of
the scope of quantifiers, formula \Fii can be rewritten as
$H_{j+1} \wedge \prob{\Abs{W}{j}}{\Abs{H}{j} \wedge \Ttrx{j} \wedge \trx{j,j+1}}$.
So the original formula \Fii is logically equivalent to
formula \prob{\Abs{W}{j}}{\Abs{H}{j+1} \wedge \Ttrx{j+1}}.
\end{proof}

\vspace{3pt}
%
%
\begin{proposition}
\label{prop:corr_bfs}
Let $T_{j-1,j} = \trx{j-1,j} \wedge R_{j-1,j}$, $j > 0$. Let formulas
$H_0,\dots,H_j$ be built consecutively as follows. Formula $H_0$
equals $I$ and formula $H_j$, $j > 0$ is built to satisfy
$\prob{\Abs{W}{j-1}}{\Abs{H}{j-1} \wedge \Ttrx{j} \wedge
R_{j-1,j}} \equiv H_j \wedge
\prob{\Abs{W}{j-1}}{\Abs{H}{j-1} \wedge \Ttrx{j}}$.
Then $H_0,\dots,H_j$ are boundary formulas.
\end{proposition}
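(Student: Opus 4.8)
The plan is to run an induction on the time-frame index $k$ that establishes the identity
\[
(\star_k):\qquad \prob{\Abs{W}{k-1}}{I_0 \wedge \Abs{T}{k}} \;\equiv\; \prob{\Abs{W}{k-1}}{\Abs{H}{k} \wedge \Ttrx{k}}
\]
(with the conventions $\Abs{T}{0}\equiv\Ttrx{0}\equiv 1$), while observing en route that each $H_k$ is boundary. The key point is that the passage from $(\star_{k-1})$ to $(\star_k)$ also rewrites the equation that \emph{defines} $H_k$ into exactly the hypothesis of Proposition~\ref{prop:bnd_form}, so $H_k$ is a boundary formula for the pair (\ks, \ksrr{k}) by that proposition. (Alternatively one could simply cite Proposition~\ref{prop:ind_bnd_form} for the identities $(\star_k)$: its appendix proof uses only the recursive construction of the $H_m$, not their being boundary, so there is no circularity.) The base case $k=0$ is immediate: $\Abs{H}{0}=H_0=I_0$, so $(\star_0)$ reads $I_0\equiv I_0$, and $H_0=I$ is trivially boundary since in $0$ transitions \Ks and \ksrr{0} both reach exactly the $I$-states, which makes clause~1 of Definition~\ref{def:bnd_form} vacuous and clause~2 hold by $H_0=I$.

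For the inductive step fix $k>0$, assume $(\star_{k-1})$, and recall that $H_k$ — a formula over the variables $S_k$ of the $k$-th cut — satisfies
\[
\prob{\Abs{W}{k-1}}{\Abs{H}{k-1}\wedge\Ttrx{k}\wedge R_{k-1,k}}\;\equiv\;H_k\wedge\prob{\Abs{W}{k-1}}{\Abs{H}{k-1}\wedge\Ttrx{k}}.
\]
I would rewrite both sides using: $\trx{k-1,k}\wedge R_{k-1,k}=T_{k-1,k}$ (hence $\Ttrx{k}\wedge R_{k-1,k}=\Ttrx{k-1}\wedge T_{k-1,k}$ and $\Ttrx{k}=\Ttrx{k-1}\wedge\trx{k-1,k}$); the fact that none of $T_{k-1,k},\trx{k-1,k},R_{k-1,k}$ mentions a variable of $\Abs{W}{k-2}$; and $\Abs{T}{k-1}\wedge\trx{k-1,k}=\ttrx{k}$, $\Abs{T}{k-1}\wedge T_{k-1,k}=\Abs{T}{k}$. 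Splitting $\exists\Abs{W}{k-1}=\exists W_{k-1}\exists\Abs{W}{k-2}$, pulling $T_{k-1,k}$ (resp.\ $\trx{k-1,k}$) out past the inner quantifier, applying $(\star_{k-1})$ to the inner subformula $\prob{\Abs{W}{k-2}}{\Abs{H}{k-1}\wedge\Ttrx{k-1}}$, and re-collecting yields $\prob{\Abs{W}{k-1}}{\Abs{H}{k-1}\wedge\Ttrx{k}\wedge R_{k-1,k}}\equiv\prob{\Abs{W}{k-1}}{I_0\wedge\Abs{T}{k}}$ and $\prob{\Abs{W}{k-1}}{\Abs{H}{k-1}\wedge\Ttrx{k}}\equiv\prob{\Abs{W}{k-1}}{I_0\wedge\ttrx{k}}$. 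Substituting these into the defining equation turns it into $\prob{\Abs{W}{k-1}}{I_0\wedge\Abs{T}{k}}\equiv H_k\wedge\prob{\Abs{W}{k-1}}{I_0\wedge\ttrx{k}}$, which is verbatim the hypothesis of Proposition~\ref{prop:bnd_form}; that proposition then gives that $H_k$ is a boundary formula.

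To close the induction I would read off $(\star_k)$ from the same rewriting: since $H_k$ is over $S_k$ it factors out of $\exists\Abs{W}{k-1}$, so $\prob{\Abs{W}{k-1}}{\Abs{H}{k}\wedge\Ttrx{k}}=H_k\wedge\prob{\Abs{W}{k-1}}{\Abs{H}{k-1}\wedge\Ttrx{k}}$, which by the defining equation equals $\prob{\Abs{W}{k-1}}{\Abs{H}{k-1}\wedge\Ttrx{k}\wedge R_{k-1,k}}$, and this was just shown to be equivalent to $\prob{\Abs{W}{k-1}}{I_0\wedge\Abs{T}{k}}$. Hence $(\star_k)$ holds, the induction is complete, and all of $H_0,\dots,H_j$ are boundary formulas. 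I expect the only genuine delicacy to be this quantifier bookkeeping — keeping straight which transition relations are relaxed in $\Abs{T}{k}$, $\ttrx{k}$ and $\Ttrx{k}$, and justifying each move of $T_{k-1,k}$, $\trx{k-1,k}$ and $H_k$ across a quantifier — rather than anything conceptual; everything else is an appeal to Proposition~\ref{prop:bnd_form}.
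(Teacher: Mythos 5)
Your proposal is correct and takes essentially the same route as the paper: the paper's proof also argues by induction, splits $\exists\Abs{W}{k-1}$ into $\exists W_{k-1}\exists\Abs{W}{k-2}$, uses the identity of Proposition~\ref{prop:ind_bnd_form} (your invariant $(\star_{k-1})$) to replace $\prob{\Abs{W}{k-2}}{\Abs{H}{k-1}\wedge\Ttrx{k-1}}$ by $\prob{\Abs{W}{k-2}}{I_0\wedge\Abs{T}{k-1}}$ on both sides of the defining PQE equation, and then concludes that $H_k$ is boundary via Proposition~\ref{prop:bnd_form}. The only difference is organizational: you inline that identity as an explicit induction invariant instead of citing Proposition~\ref{prop:ind_bnd_form}, which also makes explicit the non-circularity point you raise (its proof uses only the recursive construction of the $H_m$).
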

\begin{proof}[\kern-10pt Proof]
The fact that $H_0$ is a boundary formula follows from
Definition~\ref{def:bnd_form}.  Let us show that formulas
$H_1,\dots,H_j$ are also boundary by induction. Assume that formulas
$H_1,\dots,H_{j-1}$ are boundary and show that then $H_j$ is a
boundary formula as well.

Proposition~\ref{prop:ind_bnd_form} entails that
formula \prob{\Abs{W}{j-2}}{\Abs{H}{j-1} \wedge \Ttrx{j-1}} can be
replaced with \prob{\Abs{W}{j-2}}{I_0 \wedge \Abs{T}{j-1}}. So
\prob{\Abs{W}{j-1}}{\Abs{H}{j-1} \wedge \Ttrx{j} \wedge R_{j-1,j}}
can be rewritten as
$\exists{W_{j-1}}[\prob{\Abs{W}{j-2}}{\Abs{H}{j-1} \wedge \Ttrx{j-1} \wedge \trx{j-1,j} \wedge
R_{j-1,j}}]$, then as
$\exists{W_{j-1}}[\prob{\Abs{W}{j-2}}{I_0 \wedge \Abs{T}{j-1} \wedge \trx{j-1,j} \wedge
R_{j-1,j}}]$ and finally
as \prob{\Abs{W}{j-1}}{I_0 \wedge \Abs{T}{j}}.  Similarly formula
$H_j \wedge
\prob{\Abs{W}{j-1}}{\Abs{H}{j-1} \wedge \Ttrx{j}}$ can be rewritten
as
$H_j \wedge \exists{W_{j-1}}[\prob{\Abs{W}{j-2}}{\Abs{H}{j-1} \wedge \Ttrx{j-1} \wedge \trx{j-1,j}}]$,
then as
$H_j \wedge \exists{W_{j-1}}[\prob{\Abs{W}{j-2}}{I_0 \wedge \Abs{T}{j-1} \wedge \trx{j-1,j}}]$
and finally as $H_j \wedge \prob{\Abs{W}{j-1}}{I_0 \wedge \ttrx{j}}$.
So $H_j$ satisfies $\prob{\Abs{W}{j-1}}{I_0 \wedge \Abs{T}{j}} \equiv
H_j \wedge \prob{\Abs{W}{j-1}}{I_0 \wedge \ttrx{j}}$. Then from
Proposition~\ref{prop:bnd_form} it follows that $H_j$ is a boundary
formula.
\end{proof}

\vspace{3pt}
%
%
\begin{proposition}
Let $H_j$, $j=1,\dots,m$ be formulas derived by \Alg for $m$ time
frames where \impl{H_j}{P}.  Then property $P$ holds for system \Ks
for at least $m$ transitions.
\end{proposition}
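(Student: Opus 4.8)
The plan is to argue directly from the boundary-formula property together with Proposition~\ref{prop:ind_bnd_form}. Suppose, for contradiction, that property $P$ fails for \Ks within $m$ transitions; then there is a smallest $j \le m$ and a valid trace \tr{s}{0}{j} of \Ks with $I(\pnt{s_0})=1$ and $P(\pnt{s_j})=0$. First I would observe that each $H_j$ built by \Alg is a boundary formula for the pair $(\ks,\ksrr{j})$: this is exactly what Proposition~\ref{prop:corr_bfs} guarantees, since \Alg constructs $H_j$ by taking $R_{j-1,j}$ out of the scope of quantifiers in $\prob{\Abs{W}{j-1}}{\Abs{H}{j-1}\wedge\Ttrx{j}\wedge R_{j-1,j}}$, which is the hypothesis of that proposition. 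In particular the second clause of Definition~\ref{def:bnd_form} applies: $H_j(\pnt{s})=1$ for every state \pnt{s} reachable in \Ks in $j$ transitions.

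Next I would apply this to the offending state \pnt{s_j}. Since \tr{s}{0}{j} is a valid trace of \Ks and $I(\pnt{s_0})=1$, the state \pnt{s_j} is reachable in \Ks in $j$ transitions, so $H_j(\pnt{s_j})=1$. But \impl{H_j}{P} is assumed in the statement of the proposition, hence $P(\pnt{s_j})=1$, contradicting $P(\pnt{s_j})=0$. This yields the claim that no bad state is reachable in \Ks within $m$ transitions, i.e. $P$ holds for at least $m$ transitions. (Alternatively, one can phrase the same argument through Proposition~\ref{prop:ind_bnd_form}: $\prob{\Abs{W}{j-1}}{I_0\wedge\Abs{T}{j}}\equiv\prob{\Abs{W}{j-1}}{\Abs{H}{j}\wedge\Ttrx{j}}$, so reachability of \pnt{s_j} in \Ks forces $\Abs{H}{j}$, and in particular $H_j$, to be satisfiable with \pnt{s_j} on $S_j$; then \impl{H_j}{P} finishes it. I would keep the boundary-formula phrasing since it is more transparent.)

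The only real subtlety — and the step I would be most careful about — is justifying that the formulas $H_j$ produced by the \emph{procedural} \Alg actually satisfy the equivalence hypothesis of Proposition~\ref{prop:corr_bfs}, given that \Alg computes $H_j$ incrementally (via $\mi{FndBadSt}$/$\mi{Relax}$/$\mi{PQE}$ inside $\mi{RemBadSt}$, then $\mi{FinRlx}$, then $\mi{ThirdCOcond}$ and $\mi{FinTouch}$) and repeatedly relaxes earlier time frames and strengthens earlier boundary formulas. Here I would note that each individual $\mi{PQE}$ call adds to $H_k$ a set of clauses making up exactly for the incremental relaxation $R_{k-1,k}$, so that the invariant $\prob{\Abs{W}{k-1}}{\Abs{H}{k-1}\wedge\Ttrx{k}\wedge R_{k-1,k}}\equiv H_k\wedge\prob{\Abs{W}{k-1}}{\Abs{H}{k-1}\wedge\Ttrx{k}}$ is maintained across all the relaxations, with $\trx{k-1,k}$ and $H_k$ being the current (fully relaxed / fully strengthened) versions by the time \impl{H_j}{P} is asserted. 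Once that bookkeeping invariant is in place, Proposition~\ref{prop:corr_bfs} delivers that every $H_j$ is boundary and the contradiction argument above goes through unchanged.
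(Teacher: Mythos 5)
Your central contradiction argument is sound, but you take a noticeably heavier route than the paper. The paper's proof is two lines: as noted in Subsection~\ref{subsec:one_two_conds}, every clause of $H_j$ is obtained by resolving clauses of $I_0 \wedge \Abs{T}{j}$, so \impl{I_0 \wedge \Abs{T}{j}}{H_j}; combined with the hypothesis \impl{H_j}{P} this immediately gives \impl{I_0 \wedge \Abs{T}{j}}{P}, i.e.\ no bad state is reachable in $j \leq m$ transitions. You instead invoke Proposition~\ref{prop:corr_bfs} to conclude that each $H_j$ is a full boundary formula and then use only the second clause of Definition~\ref{def:bnd_form}. That does yield the claim, but it is more than the proposition needs, and --- as you yourself flag --- it shifts the burden onto verifying that the exact PQE equivalence of Proposition~\ref{prop:corr_bfs} is preserved by the procedural \Alg (repeated re-relaxations inside $\mi{RemBadSt}$ and $\mi{FinRlx}$, strengthening of earlier $H_k$, clause pushing in $\mi{FinTouch}$) with the final relaxed relations; that two-sided bookkeeping invariant is genuinely delicate and the paper never relies on it for this result. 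The only fact actually required is the one-sided implication \impl{I_0 \wedge \Abs{T}{j}}{H_j}, which is robust to all of the algorithm's updates: every PQE call adds clauses implied by a formula that is itself implied by $I_0 \wedge \Abs{T}{j}$, and clauses pushed from $H_j$ back to $H_m$, $m<j$, stay implied thanks to stuttering, since $\prob{\Abs{W}{m-1}}{I_0\wedge\Abs{T}{m}} \rightarrow \prob{\Abs{W}{j-1}}{I_0\wedge\Abs{T}{j}}$. Your alternative phrasing via Proposition~\ref{prop:ind_bnd_form} has the same dependence. So your route is correct in spirit but buys nothing extra here while costing an invariant you only sketch; the paper's weaker premise reaches the same conclusion with essentially no machinery.
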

\begin{proof}[\kern-10pt Proof]
As we mentioned in Subsection~\ref{subsec:one_two_conds},
$I_0 \wedge \Abs{T}{j} \rightarrow H_j$ holds. Then \impl{H_j}{P}
entails $I_0 \wedge \Abs{T}{j} \rightarrow P$.
\end{proof}

\vspace{3pt}
%
%
\begin{proposition}
\Alg is sound.
\end{proposition}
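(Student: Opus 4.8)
The plan is to show that if \Alg returns $\mi{Yes}$, then property $P$ genuinely holds for system \Ks, i.e. no bad state is reachable from an $I$-state in any number of transitions. The algorithm returns $\mi{Yes}$ only in line~11 of Figure~\ref{fig:pc_lor}, when $\mi{FinTouch}$ reports $\mi{Inv}$. By the description of $\mi{FinTouch}$ in Subsection~\ref{subsec:alg_descr}, this happens exactly when \impl{H_m}{H_{m-1}} holds for some $m$ with $0 < m \le j$. At that point, combined with the fourth CO condition \impl{H_{m-1}}{H_m}, we get $H_{m-1} \equiv H_m$. So the core of the proof is to argue that $H_{m-1}$ (equivalently $H_m$) is then an inductive invariant that separates all reachable states of \Ks from all bad states.

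First I would establish the three facts needed: (i) \impl{I}{H_{m-1}} (first CO condition); (ii) \impl{H_{m-1}}{P} (second CO condition); and (iii) \impl{H_{m-1} \wedge T_{m-1,m}}{H_{m-1}}. Fact (iii) is obtained by chaining the third CO condition \impl{H_{m-1} \wedge \trx{m-1,m}}{H_m} with \impl{T_{m-1,m}}{\trx{m-1,m}} (transition relation implies its relaxation) and $H_m \equiv H_{m-1}$. Since the transition relation $T$ is the same in every time frame, fact (iii) says $H_{m-1}(S) \wedge T(S,X,Y,S') \rightarrow H_{m-1}(S')$, i.e. $H_{m-1}$ is closed under transitions. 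Then a routine induction on the number of transitions shows: every state reachable from an $I$-state in \Ks satisfies $H_{m-1}$ (base case from (i), inductive step from (iii)); and by (ii) every such state is a $P$-state. Hence $P$ holds for \ks, so returning $\mi{Yes}$ is correct.

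I would also need to confirm that the CO conditions really do hold at the moment $\mi{FinTouch}$ returns — i.e. that the various repair procedures ($\mi{RemBadSt}$, $\mi{FinRlx}$, $\mi{ThirdCOcond}$, $\mi{FinTouch}$) restore all four CO conditions and that this is an invariant maintained across iterations of the while loop. This is where I expect the main obstacle to lie: the CO conditions are interdependent (repairing the third via $\mi{ThirdCOcond}$ may break the second or the third for earlier frames, repairing the fourth via $\mi{FinTouch}$ may break the third, etc.), so one must argue that the nested repair loops terminate and leave all four conditions simultaneously true before line~11 is reached. I would package this as a loop-invariant claim — "upon completion of $\mi{FinTouch}$ in iteration $j$, formulas $H_0,\dots,H_j$ satisfy all four CO conditions" — and lean on the correctness statements already given for those subprocedures (and on Propositions~\ref{prop:corr_bfs}/\ref{prop:lmtd_corr}) rather than re-deriving them. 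Granting that invariant, the soundness argument above goes through cleanly, and the only remaining care is the trivial bookkeeping that $H_0 = I$ so the first CO condition is genuinely about $I$.
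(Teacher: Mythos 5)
There is a genuine gap: you prove only half of what the paper's soundness claim covers. \Alg can return two answers, $\mi{Yes}$ (line 11) and $\mi{No}$ (line 7, when $\mi{RemBadSt}$ returns a counterexample), and soundness in the paper means that \emph{whichever} answer is returned is correct. Your proposal establishes only the $\mi{Yes}$ branch. The $\mi{No}$ branch is not automatic, because the trace $\mi{Cex}$ built by $\mi{RemBadSt}$ is a trace of the \emph{relaxed} systems: its states are $H_k$-states and consecutive states are connected by the relaxed relations \trx{k-1,k}, not by $T_{k-1,k}$. So the existence of $\mi{Cex}$ only shows that a bad state is reachable when one uses $I_0 \wedge \Abs{H}{j} \wedge \Ttrx{j}$; one still has to argue that this forces reachability of a bad state in the original system \ks. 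This is exactly what the paper uses Proposition~\ref{prop:ind_bnd_form} for: since $\prob{\Abs{W}{j-1}}{I_0 \wedge \Abs{T}{j}} \equiv \exists{\Abs{W}{j-1}}[\Abs{H}{j} \wedge \Ttrx{j}]$, a satisfying assignment of the relaxed-plus-boundary formula that falsifies $P$ yields a satisfying assignment of $I_0 \wedge \Abs{T}{j}$ falsifying $P$, i.e.\ a genuine counterexample in \ks. Without this step (or an equivalent argument), returning $\mi{No}$ could in principle be wrong, since the relaxed system reaches strictly more states than \ks.

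For the $\mi{Yes}$ branch your argument matches the paper's: from \impl{H_m}{H_{m-1}} and the fourth CO condition you get $H_{m-1} \equiv H_m$, and chaining the third CO condition with \impl{T_{m-1,m}}{\trx{m-1,m}} gives $H_{m-1} \wedge T_{m-1,m} \rightarrow H_{m-1}$; with the first and second CO conditions this makes $H_{m-1}$ an inductive invariant implying $P$ (you spell out the induction over transitions that the paper leaves implicit, which is fine). Your remark that one must also verify the CO conditions actually hold when $\mi{FinTouch}$ returns is a reasonable point of rigor, but the missing treatment of the counterexample answer is the substantive omission.
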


\begin{proof}[\kern-10pt Proof]
Consider the two obvious alternatives.

\vspace{2pt}
\ti{The answer is ``property fails''}. This answer is returned by \Alg if 
there is an assignment \pnt{t} satisfying
$I_0 \wedge \Abs{H}{j} \wedge \Ttrx{j}$ and $\overline{P}$. From
Proposition~\ref{prop:ind_bnd_form} it follows, that then there is an
assignment \pnt{t^*} satisfying $I_0 \wedge \Abs{T}{j}$ and
$\overline{P}$. Hence there is a counterexample of length $j+1$.

\vspace{3pt}
\ti{The answer is ``property holds''}. 
This answer is returned when there appear a formula $H_{j-1}$ such
that \impl{H_j}{H_{j-1}} and $H_{j-1} \wedge \trx{j-1,j} \rightarrow
H_j$ hold.  Since $H_{j-1}$ implies $H_j$, then $H_{j-1} \equiv H_j$.
Since $T_{j-1,j}$ implies \trx{j-1,j}, $H_{j-1} \wedge
T_{j-1,j} \rightarrow H_j$ holds as well and $H_{j-1}$ is an inductive
invariant.
\end{proof}

\vspace{3pt}
%
%
\begin{proposition}
\Alg is complete.
\end{proposition}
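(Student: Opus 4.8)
The plan is to show that \Alg cannot run forever: either it returns a counterexample, or it returns "property holds" after finitely many iterations of the outer while loop. Completeness then follows from soundness (Proposition~\ref{prop:sound}), which guarantees that whichever answer is returned is correct. So the real content is a \emph{termination} argument. First I would argue that each individual iteration of the while loop (lines 3--12 of Fig.~\ref{fig:pc_lor}) terminates: the inner procedures $\mi{RemBadSt}$, $\mi{FinRlx}$, $\mi{ThirdCOcond}$ and $\mi{FinTouch}$ each make progress by conjoining to some $H_m$ at least one new clause falsified by a concrete state, and since every $H_m$ depends only on the finitely many variables of one time frame, only finitely many distinct clauses can ever be added. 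More carefully: $\mi{FinRlx}$ removes at least one $\overline{P}$-state from $H_j$ each time it adds a clause, so after finitely many steps \impl{H_j}{P}; $\mi{RemBadSt}$ either extends its reverse trace (bounded in length by $j$) or strengthens some $H_k$, $k<j$, and the latter can happen only finitely often before a counterexample is forced or the trace bottoms out; similarly for $\mi{ThirdCOcond}$ and $\mi{FinTouch}$.

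Next I would handle the outer loop. The key fact is the fourth CO condition \impl{H_{j-1}}{H_j}: the sequence of boundary formulas is monotone, so the sets of $H_j$-states form a (weakly) increasing chain of subsets of the finite state space $2^{|S|}$. Hence the chain stabilizes after finitely many steps: there is some $j$ with $H_j$ and $H_{j-1}$ logically equivalent. When that happens, $\mi{FinTouch}$ (which checks \impl{H_m}{H_{m-1}} for $0 \le m \le j$, as described in Subsection~\ref{subsec:alg_descr}) detects it, sets $\mi{Inv}$, and \Alg returns "property holds". The only way to escape this is if, before the chain stabilizes, some call to $\mi{RemBadSt}$ returns a non-nil $\mi{Cex}$; but then \Alg returns "property fails" at line~7, so it still terminates. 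Thus in all cases \Alg halts, and by soundness the answer is correct, i.e. \Alg is complete.

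The main obstacle I expect is the termination of the \emph{inner} machinery, specifically the interplay between $\mi{RemBadSt}$/$\mi{FinRlx}$ relaxing transition relations of \emph{earlier} time frames and then having to repair the third CO condition for those frames. One must rule out an infinite ping-pong in which fixing $H_m$ breaks the third condition at $H_{m-1}$, whose repair breaks it again at $H_m$, and so on. The clean way to dispatch this is the same monotonicity/finiteness observation: every repair step adds a genuinely new clause (one falsified by a state not previously excluded) to some $H_k$ with $k$ in the finite range $0,\dots,j$, and there are only finitely many clauses over the finitely many state variables of those frames, so no such process can be infinite. Making this bookkeeping precise — that each repair strictly enlarges the total set of clauses accumulated across $H_0,\dots,H_j$ — is the one place where care is needed; everything else is the standard IC3-style convergence argument transplanted to boundary formulas.
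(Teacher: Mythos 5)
Your proposal is correct in substance, but it is organized differently from the paper's proof. The paper splits on the truth of $P$: if $P$ fails, it looks at the \emph{first} time frame $j$ in which a bad state is reachable and argues that, since $H_j$ is always implied by $I_0 \wedge \Abs{T}{j}$, \Alg can never make $H_j$ imply $P$, so $\mi{RemBadSt}$ is forced to return a counterexample; if $P$ holds, it uses exactly your argument — the fourth CO condition makes $H_0,\dots,H_m$ a weakly increasing chain of state sets, so once $m > 2^{|S|}$ some consecutive pair is logically equivalent and, with \impl{H_{j-1}\wedge \trx{j-1,j}}{H_j}, an inductive invariant is reported. You instead prove global termination (including termination of $\mi{RemBadSt}$, $\mi{FinRlx}$, $\mi{ThirdCOcond}$, $\mi{FinTouch}$ via the ``each repair strictly removes a currently allowed state of some $H_k$ over a finite state space'' bookkeeping) and then invoke soundness (Proposition~\ref{prop:sound}) to conclude the returned answer is the right one. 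Both routes are valid: the paper's failing-case argument is more self-contained in that it does not lean on soundness, while your version makes explicit the inner-loop termination (including the potential ping-pong between repairs at adjacent frames) that the paper leaves entirely implicit; that bookkeeping is a genuine addition, and your reliance on soundness to convert termination into completeness is legitimate since Proposition~\ref{prop:sound} covers both possible answers.
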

\begin{proof}[\kern-10pt Proof] Consider the following alternatives.

\ti{Property $P$ fails}. Let $j$ be the first time
frame where a bad state \pnt{s_j} is reachable by \ks.  Let $H_j$ be a
boundary formula generated for $j$-th time frame.  Since $H_j$ is
implied by $I_0 \wedge \Abs{T}{j}$, \Alg will not be able to make
$H_j$ imply $P$. Then procedure $\mi{RemBadSt}$ will terminate
reporting that $P$ failed.

\vspace{5pt}
\ti{Property $P$ holds}. Let $H_0,\dots,H_m$, be
a sequence of boundary formulas built by \Alg. Let \impl{H_{j-1}}{H_j}
hold for every $j$, $0 < j \leq m$. If $m > 2^{|S|}$ where $S$ is the
set of state variables, there has to be a formula $H_{j-1}$ that is
logically equivalent to $H_j$. Since
$H_{j-1} \wedge \trx{j-1,j} \rightarrow H_j$ holds, \Alg will
terminate reporting that $P$ holds.
\end{proof}

\bibliographystyle{plain}
\bibliography{short_sat,local}
\end{document}